\newif\iffull
\newcommand{\citey}[1]{\cite{#1}}
\newtheorem{theorem}{Theorem}
\newtheorem{definition}{Definition}
\newtheorem{corollary}{Corollary}
\newtheorem{proposition}{Proposition}
\newtheorem{lemma}{Lemma}
\theoremstyle{remark}
\newtheorem{example}{Example}
\title{Facility Reallocation on the Line}
\author{Bart de Keijzer \\ \small{\texttt{bart.de\_keijzer@kcl.ac.uk}}  \\ \small{King's College London} \and Dominik Wojtczak \\ \small{\texttt{d.wojtczak@liverpool.ac.uk}} \\ \small{University of Liverpool}}
\begin{document}

\maketitle

\begin{abstract}
We consider a multi-stage facility reallocation problems on the real line, where a facility is being moved between time stages based on the locations reported by $n$ agents. The aim of the reallocation algorithm is to minimise the social cost, i.e., the sum over the total distance between the facility and all agents at all stages, plus the cost incurred for moving the facility.
We study this problem both in the offline setting and online setting. In the offline case the algorithm has full knowledge of the agent locations in all future stages, and in the online setting the algorithm does not know these future locations and must decide the location of the facility on a stage-per-stage basis. 
We derive the optimal algorithm in both cases. For the online setting we show that its competitive ratio is $(n+2)/(n+1)$.
As neither of these algorithms turns out to yield a strategy-proof mechanism, we propose another strategy-proof mechanism which has a competitive ratio of $(n+3)/(n+1)$ for odd $n$ and $(n+4)/n$ for even $n$, which we conjecture to be the best possible.
We also consider a generalisation with multiple facilities and weighted agents, for which we show that the optimum can be computed in polynomial time for a fixed number of facilities.
\end{abstract}

\section{Introduction}
Facility location is one of the most well-studied problems in the literature due to its multitude of practical applications, e.g., to clustering of images \cite{songCVPR17}, and to document and image summarisation \cite{lin2012learning,tschiatschek2014learning}.
In its simplest form, also referred to as the Weber problem \cite{weber1909standort}, the aim is to locate a single point from which the sum of the transportation costs to $n$ agents' locations is minimal. The generalisation of this problem where the task is to place $k$ facilities in a way that the sum of the distances of each agent to its nearest facility is minimised, is NP-hard already in two-dimensions \cite{megiddo1984complexity}. 
However, it is polynomial time solvable in the one-dimensional setting \cite{megiddo1983maximum}, i.e., when the agents' and facilities' locations are all placed along a single real line. Such scenarios were studied, e.g., in the context of an optimal placement of public facilities along a street \cite{miyagawa2001locating} or to analyse voting scenarios \cite{feldman2016voting}.

We generalise this classic facility location problem to the situation where the interaction between agents and facilities lasts over multiple rounds, the agents' locations may not be known in advance and the facilities can be moved if needed. In particular, let us consider the following motivating example. Assume there is a political party with $k$ members that would like to win the next $T$ consecutive parliamentary elections. In order to achieve this, the party would like its members to represent the political opinions of as many voters as possible to get their votes. A voter feels well-represented if at least one party member has a similar political stance as her. As a result, a party that would like to succeed should try to gather members with a diverse range of political opinions.\footnote{In reality, it would not be possible to choose such political positions in a completely arbitrary way, as a certain degree of consistency in a party's program is needed for it to be taken seriously by the public. This raises some interesting open research questions.} During each term, the political opinion of the voters may change and the party may need to refocus and reconsider their positions, to better reflect current political sentiments. At the same time, each time a politician changes their opinion, they lose a bit of credibility. To estimate such a difference in opinions, \citey{downs1957economic} proposed to model the political views as a spectrum, ranging from extreme-left to extreme-right, as points along as a single real line. The ultimate aim for the party is then to minimise the sum of the distances from its voters while simultaneously taking into consideration the credibility that is lost when readjusting the party's political stance before each election.

In an alternative formulation, one can imagine a long and narrow beach where $k$ ice-cream vendors (owned by the same company) are to be located. For the next $T$ hours, the beach is visited by $n$ customers and their location may change throughout the day. As each client will typically simply pick the closest vendor, it is best for the vendors to change their location throughout the day to adjust to the demand. The aim in this case is the minimisation of the social cost, i.e., the total distance that the customers as well as the ice-cream vendors have to travel.


The models we described so far assumed the agents to report their location truthfully. 
However, since each agent would like to be as close as possible to one of the facilities, 
they may have an incentive to lie, and misreport their location as an attempt to make the facility move closer to their real location. From the point of view of the facility owner, such untruthful behaviour is highly undesirable, as the reported information needs to be reliable for making effective decisions on relocating the facilities. Thus, one typically strives to devise a {\em strategy-proof mechanism}, where the term {\em mechanism} simply refers to an algorithm that takes inputs from multiple independently acting self-interested agents, and outputs a facility assignment based on the locations reported by these agents, while {\em strategy-proof} refers to the property that under this mechanism no agent can gain by misreporting their location.
There are various very important mechanism design domains where strategy-proofness is attained by allowing the mechanism to charge a payment from the agents, where the \emph{utility function} that an agent is trying to maximise is then modeled by including the payment as a negative term. Examples of such domains include many auction scenarios, where an auctioneer runs a mechanism to sell one or more items, and the participating agents can receive these items in exchange for a payment.
However, there are also many domains where payments are impossible or undesirable, 
e.g., in kidney exchanges, public projects, politics, and voting settings. This impossibility may arise due to e.g. ethical, legal or privacy issues. 
In such settings, the mechanism proposed needs to be strategy-proof, without using any monetary transfers. One of the aims (among others) of the present paper is to design strategy-proof mechanism without money for the facility reallocation problem.
%

\paragraph{Outline of this paper.}
Our analysis starts in Section \ref{sec:regular} with finding an optimal algorithm in the case the true locations of all the agents are known, which we call the {\em offline} setting. We show that there is an algorithm for the offline setting that runs in linear time for one facility ($k=1$) and another one that runs in polynomial time for any fixed $k$.
We then adapt our algorithm for $k=1$ to the {\em online} setting in Section \ref{sec:online}.
In such a setting, at each time stage we are required to make the decisions on the location of the subsequent time stage, before seeing the remainder of the input (i.e., the locations of the agents in future stages), which makes it impossible to find a solution of the same quality as the optimal offline solution. However, for the online setting we are able to minimise the {\em competitive ratio} instead, which is the worst-case ratio of the cost returned by the online algorithm and the optimal offline cost. We show a mechanism of which the competitive ratio is $(n+2)/(n+1)$ and prove that no other algorithm can do better.
Finally, in Section \ref{sec:sp}, we show that neither of these one facility location algorithms yields a strategy-proof mechanism, and we devise a new strategy-proof mechanism without monetary transfers. We show that the competitive ratio of this mechanism is $(n+4)/n$ for odd $n$ and $(n+3)/(n+1)$ for even $n$, and that these values are tight.

A preliminary version of this paper, where most of the proofs were omitted has appeared as \citey{de2018facility}.


\subsection{Related Work}\label{sec:related}
The body of literature on facility location is extensive and very diverse in the large array of variations of the problem that has been considered in past literature. We limit our discussion in this setting to the papers that are, to the best of our knowledge, most closely related to ours.

Since an earlier publication of a preliminary conference version of the present paper \citey{de2018facility}, direct follow up work to has appeared in \citey{fotakisetal} 
where the authors present a polynomial time algorithm for the generalisation of the reallocation problem where there are multiple (i.e., $K \geq 1$) facilities. The main results of \citey{fotakisetal} are an algorithm for computing the optimal solution in the offline variant of the problem, where all agent locations at all stages are known in advance. This algorithm runs in time polynomial in $n$, $T$, and $K$. Additionally, the authors present an online algorithm with an analysis that bounds its competitive ratio. In the present paper, we also present an (offline) algorithm for the $K$-facility variant, under the additional generalisation that the objective is to minimise a \emph{weighted} sum of costs of the players. Our algorithm, runs in time exponential in $K$, but polynomial in $n$ and $T$, yielding a polynomial time algorithm for each fixed choice of $K$.

Our work fits tightly into the literature of time-evolving optimisation problems, where an instance of a computational problem changes over time and there is a cost incurred by implementing a change in the solution at each time step. 

See, for example, \cite{evolving1,evolving2,evolving3}, where the latter two works consider two other variants of time-evolving facility reallocation problems.

A mobile facility location problem, which can be seen as a one-stage version of our problem with $k$ facilities, was introduced in \cite{friggstad2011minimizing} where it was shown that this problem is NP-hard in general. 
A polynomial $(3+\epsilon)$-approximation algorithm was given in \cite{ahmadian2013local}. 

The study of the $k$-facility location problem in an online setting, also called the $k$-median problem in such a context, has been extensively studied (see, e.g., \cite{fotakis2011online} for a survey). In particular, \citey{diveki2011online} studied an online model where the location of the facilities can be moved, but with a zero cost.

The papers \citey{duan2019heterogeneous,xu2021two} comprise a recent study on strategy-proofness for a facility location problem on a line, where there are two facilities to be placed, and agents aim to minimise their total distance to both these facilities. Another recent work is \citey{Chen2020185}, where there are again two facilities, and the utility functions of the agents are heterogeneous, where an agent may either maximise or minimise over the distances to the facilities.

The field of approximate mechanism design without money was initiated by \citey{procaccia2009approximate} where the facility location problem was considered. This research has attracted much attention in recent AI conferences. For example, \citey{todo2011false} study false-name strategy-proof mechanisms on a real line, i.e., such mechanisms cannot be manipulated to their advantage by agents who replicate themselves. The paper \citey{sui2013analysis} study strategy-proof facility location in multi-dimensional space for different metrics and devise the {\em percentile mechanisms} for them. 
In \citey{zou2015facility}, strategy-proof mechanisms are studied for agents with dual preferences where some agents would like to be as close as possible to a facility, while others would prefer to be as far as possible.
Moreover, \citey{serafino2015truthful} study the two facility problem where the cost function may differ between agents. 
The paper \citey{filos2017facility} studies strategy-proof mechanisms for double-peaked preferences, which can model e.g., a scenario where each agent would like to be close to a facility, but not too close.
In \citey{procacciaapproximation} the trade-off is studied between variance and approximation factor for strategy-proof mechanisms.
The one-stage facility location problem in the context of voting under the constraint that
the facilities can only be placed on agents' locations is studied in \citey{feldman2016voting}.
Furthermore, \cite{fotakis2014power} characterised completely the deterministic strategy-proof mechanisms for the placement of two facilities on the line and showed that the best approximation ratio of such a mechanism is $n-2$. 
Lastly, \citey{lu2010asymptotically} showed there exists a 4-approximation randomised mechanism for the same problem, while a 1.045 lower bound is also known \cite{lu2009tighter}.

\section{Preliminaries}
For $a \in \mathbb{N}$, we will write $[a]$ to denote the set $\{1,\ldots, a\}$. In this paper we will treat all sets as multisets, and all the operations are thus multiset operators.

An instance of the \emph{facility reallocation problem} is a quadruple $(n,T,y^0,x)$, where $n \in \mathbb{N}$ is the number of agents, $T$ is the number of stages, $y^0$ is the starting location of facility, and $x = (x^1,\ldots, x^T)$ are the vectors of agent locations in each stage, where $x^t = (x_1^t \ldots, x_n^t) \in \mathbb{R}^n$ are the locations of the agents at Stage $t \in [T]$. 
A \emph{solution} of a given instance is a placement of the facility at each of the stages, i.e., a sequence $y = (y^1,\ldots y^T) \in \mathbb{R}^T$. A \emph{mechanism} is a mapping from instances to solutions.\footnote{For convenience, we conflate the terms \emph{algorithm} and \emph{mechanism} from this point.} The \emph{cost} of a solution $y$ is given by 
\begin{equation*}
C(y) = \sum_{t = 1}^T \left(|y_j^{t-1} - y_j^t| + \sum_{i = 1}^n |x_i^t - y_j^t|\right),
\end{equation*}
which is, in words, the sum of distances from each agent to the facility at each stage $t$, plus the total distance the facility moves across all stages. An \emph{optimal solution} is a solution that minimises $C$.
For convenience we denote the individual terms in the above summation by $C^1,\ldots C^T$. So, for $t \in [T]$ we let
\begin{equation*}
C^t(y) = \left(|y_j^{t-1} - y_j^t| + \sum_{i = 1}^n |x_i^t - y_j^t|\right),
\end{equation*}
so that
\begin{equation*}
C(y) = \sum_{t = 1}^T C^t(y). 
\end{equation*}
As $C^t$ is only dependent on the values of $y$ at coordinates $t-1$ and $t$, we may overload notation and occasionally write $C^t(y^{t-1},y^t)$ instead of $C(y)$.

We define $X^t$ as the multiset $\{x^t_1,\ldots, x_n^t\}$. Let $t \in [T]$ be a stage, and let $y^{t-1}$ be any location. We define $M^{t}(y^{t-1})$ as the \emph{median} of the set of points $X^t \cup \{y^{t-1}\}$, i.e., the set of points $z$ such that $\sum_{i = 1}^n |x_i^t - z| + |y^{t-1} - z|$ is minimised. Note that $M^{t}(y^{t-1})$ implicitly depends on the set $X^t$ which is part of a facility reallocation problem instance, but this set $X^t$ will be clear from context at all times throughout our discussion. It is straighforward to verify that $M^t(y^{t-1})$ is the middle point of $\{y^{t-1}\} \cup X^t$ if $n$ is even, and is the interval between (and including) the two middle points of $\{y^t\} \cup X^t$ if $n$ is odd. 

In Section \ref{sec:sp}, we study the \emph{strategy-proofness} property of our mechanisms. There, we assume that the input to the mechanism is provided by the agents, who are interested in minimising their total distance to the facility. They may thus misreport their true locations, in case this results in facility placements closer to their true locations.

Let $A$ be a mechanism. We define the \emph{cost of Agent $i \in [n]$ for a solution $y$} as 
\begin{equation*}
c_i(y) = \sum_{t=1}^T |y^i - x_i^{t}|.
\end{equation*}
We use the notation $(\tilde{x}_S,x_{-S})$ to denote a solution obtained from $x$ by replacing the location vectors $\{x_i : i \in S\}$ where $x_i = (x_i^1,\ldots,x_i^T)$, by different vectors $\tilde{x}_S = \{\tilde{x}_i : i \in S\}$, where $\tilde{x}_i = (\tilde{x}_i^1, \ldots, \tilde{x}_i^T)$ are the alternative locations corresponding to Agent $i \in S$. Mechanism $A$ is \emph{group-strategy-proof} if for all $S \subseteq [n]$, for all $\tilde{x}_S$, there exists an $i \in S$ such that $c_i(A(x)) \leq c_i(A(\tilde{x}_S,x_{-S}))$. Mechanism $A$ is \emph{strategy-proof} if for all $i$ and for all $\tilde{x}_i$ it holds that $c_i(A(x)) \leq c_i(A(\tilde{x}_{i},x_{-i}))$. Thus, stated more informally, strategy-proofness is a property that requires that no agent can improve their cost through reporting a set of locations other than their true locations. Similarly, group-strategy-proofness requires that no set of agents can collectively report alternative locations such that all agents in the set strictly improve their cost.

\section{Optimal Mechanisms}
\label{sec:regular}
First, we consider the basic problem of computing an optimal solution to the facility reallocation problem when the complete instance is given to the mechanism in advance.

Let $I = (n,T,y^0,x)$ be a facility reallocation instance.
The following lemmas show that in every Stage $t \in [T]$, putting the facility on a point in the interval $M^t(y^{t-1})$ is less expensive than putting the facility outside of $M^t(y^{t-1})$, regardless of the choice of facility locations in all the other stages. 
\begin{lemma}\label{lem:easystuff}
Let $y = (y^1, \ldots, y^T)$ be a solution to $I$ and let $t \in [T]$, and let $d$ be the distance between $y^t$ and the nearest point $z \in M^t(y^{t-1})$. Then,
\begin{equation*}
d \geq C^t(y) - C^t((z,y^{-t})),   
\end{equation*}
where $(z,y^{-t})$ is the vector of facility locations obtained from $y$ by replacing $y^t$ with $z$.
\end{lemma}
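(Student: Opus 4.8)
The plan is to fix the stage $t$ and reduce everything to the single convex function $g(w) = |y^{t-1} - w| + \sum_{i=1}^n |x_i^t - w|$. By definition $C^t(y) = g(y^t)$ and $C^t((z,y^{-t})) = g(z)$, so, writing $S := C^t(y) - C^t((z,y^{-t})) = g(y^t) - g(z)$ and recalling that $z \in M^t(y^{t-1})$ is a minimiser of $g$, we have $S \geq 0$; the content of the lemma is thus the upper bound $S \leq d$ with $d = |y^t - z|$. If $y^t \in M^t(y^{t-1})$ then $d = 0$ and $S = 0$, so the inequality is trivial; otherwise I would assume without loss of generality that $y^t > z$, taking $z$ to be the endpoint of the median interval nearest to $y^t$, the case $y^t < z$ being symmetric.

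The first step is to separate the relocation term from the service term:
\[
S = \left(|y^{t-1} - y^t| - |y^{t-1} - z|\right) + \left(\sum_{i=1}^n |x_i^t - y^t| - \sum_{i=1}^n |x_i^t - z|\right).
\]
The relocation part is handled immediately by the triangle inequality, since $|y^{t-1} - y^t| \leq |y^{t-1} - z| + |z - y^t| = |y^{t-1} - z| + d$, so the first bracket contributes at most $d$. It therefore suffices to show that the second bracket, namely the change in total service cost when the facility is pulled from $y^t$ back onto the median endpoint $z$, is at most $0$.

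The remaining and central step is to bound this service difference using the defining property of $M^t(y^{t-1})$. I would use that the open segment $(z,y^t)$ is disjoint from the median set, classify each agent according to whether $x_i^t$ lies to the left of $z$, to the right of $y^t$, or strictly inside $(z,y^t)$, and write the service difference as a signed sum in which each agent contributes a value of absolute value at most $d$. The median property is then invoked to count these signs and to argue that the downward contributions at least cancel the upward ones, which would close the bound $S \leq d$.

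The hard part will be precisely this sign-counting step: after the relocation slack of $d$ has been accounted for, one must control how the service cost changes over the extra distance to $y^t$, i.e.\ compare the agents to the right of $y^t$ against those to the left along $[z,y^t]$. This is where the definition of the median as the minimiser of $g$ has to be used quantitatively, rather than merely as the crude fact $S \geq 0$, and I expect it to be the main obstacle; by contrast the reduction to $g$ and the triangle-inequality treatment of the relocation term are routine.
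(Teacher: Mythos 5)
There is a genuine gap, and it sits at the very first step rather than in the sign-counting you flag as the hard part: you are attempting the wrong direction of the inequality. You read the statement as the upper bound $S := C^t(y) - C^t((z,y^{-t})) \le d$, but that bound is false. Take $n=1$, $x_1^t = 0$, $y^{t-1} = 10$, $y^t = 11$: then $M^t(y^{t-1}) = [0,10]$, so $z = 10$ and $d = 1$, while $S = (|10-11| + |0-11|) - (|10-10| + |0-10|) = 12 - 10 = 2 > d$; placing all $n$ agents and $y^{t-1}$ at $0$ with $y^t = 1$ even gives $S = (n+1)d$. The inequality sign in the displayed statement is a typo: the paper's own proof concludes $C^t(y) \ge C^t((z,y^{-t})) + d$, i.e.\ $S \ge d$, and it is exactly this \emph{lower} bound that the application in Lemma \ref{lem:inthemedian} needs --- the stage-$t$ saving of at least $d = |y^t - \tilde{y}^t|$ must absorb the extra movement cost at stage $t+1$, which the triangle inequality caps at $d$. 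The same tiny example also breaks the internal step of your plan: you claim it suffices that the service bracket $\sum_{i}(|x_i^t - y^t| - |x_i^t - z|)$ be at most $0$, but here it equals $+1$, because $z$ minimises $g$, not the service sum alone ($z$ can be pulled toward $y^{t-1}$ and away from the agents' median). So the ``hard part'' you defer is not merely hard; no sign-counting can close a false bound, and a one-agent sanity check would have exposed the direction.

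The irony is that your own setup proves the correct statement almost immediately. The function $g(w) = |y^{t-1} - w| + \sum_{i=1}^n |x_i^t - w|$ is piecewise linear and convex, and between breakpoints its derivative equals the number of points of $X^t \cup \{y^{t-1}\}$ strictly left of $w$ minus the number strictly right of $w$; outside the minimising interval $M^t(y^{t-1})$ this is a nonzero integer, so the slope is at least $1$ in absolute value on the whole segment from $z$ (the endpoint of $M^t(y^{t-1})$ nearest $y^t$) to $y^t$, and convexity gives $S = g(y^t) - g(z) \ge |y^t - z| = d$, which is the intended claim. The paper reaches the same conclusion by an equivalent pairing argument: sort $X^t \cup \{y^{t-1}\}$, pair the points into nested intervals $[x^{\uparrow}_i, x^{\uparrow}_{n-i+1}]$ (leaving the middle point unpaired when the count is odd), so that the cost at $w$ is the total interval length plus twice the distance of $w$ to each interval, plus the distance to the unpaired point; $z$ is at distance $0$ from all of these, while $y^t$ is at distance $d$ from the innermost one, which yields the extra $d$. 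Either route works; the direction you chose does not.
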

\begin{proof}
In case $n$ is even, then $M^t(y^{t-1})$ is a single point, located either at $y^{t-1}$ or at one of the agents. We consider only the latter case, and assume that $M^t(y^{t-1})$ is located at an Agent $\ell$, with location $x_{\ell}^t$. The former case is proved by simply replacing $x_{\ell}^t$ by $y^{t-1}$ in the proof that follows. Consider the list $x^{\uparrow}$ in which the multiset of points $(\{y^{i-1}\} \cup X^t) \setminus \{x_\ell^t\}$ is ordered non-decreasingly. Note that $x^{\uparrow}$ consists of $n$ entries. The cost $C^t((w,y^{-t}))$ of placing the facility at any point $w$ can now be written as:
\begin{equation*}
\sum_{i=1}^{n/2} (|x^{\uparrow}_i - w| + |x^{\uparrow}_{n-i+1} - w|) + |x_{\ell}^t - w|.
\end{equation*}
Note that the $i$th term in the above summation is at least $|x^{\uparrow}_i - x^{\uparrow}_{n-i+1}|$, and in case $w$ lies in between $x^{\uparrow}_i$ and $x^{\uparrow}_{n - i + 1}$ then this holds with equality. Moreover, it is straighforward to verify that in case $w$ lies at a distance $c$ of the interval $[x^{\uparrow}_i,x^{\uparrow}_{n-i+1}]$, then the $i$th term in the summation is equal to $2c + |x^{\uparrow}_i - x^{\uparrow}_{n-i+1}|$. Point $z$ lies at distance $0$ of $x_{\ell}^t$ and is in all intervals $[x^{\uparrow}_i,x^{\uparrow}_{n-i+1}], i \in [n/2]$, so point $z$ lies at distance $0$ from all these intervals. Thus,
\begin{equation*}
C^t(z,y^{-t}) = \sum_{i=1}^{n/2} (|x^{\uparrow}_i - x^{\uparrow}_{n-i+1}|),
\end{equation*}
and $z$ minimises the total cost at Stage $t$, given $y^{-t}$. Point $y^t$ lies at distance $d$ from point $z = x_{\ell}^t$, Hence, 
\begin{equation*}
C^t(y) \geq \sum_{i=1}^{n/2} (|x^{\uparrow}_i - x^{\uparrow}_{n-i+1}|) + d = C^t((z,y^{-t})) + d,
\end{equation*}
which proves the claim for even $n$. 

In case $n$ is odd, define $x^{\uparrow}$ now as the list in which the multiset of points $\{y^{t-1}\} \cup X^t$ is ordered non-decreasingly. Note that $x^{\uparrow}$ consists of an even number of $n+1$ entries. The cost of placing the facility at any point $w$ can now be written as
\begin{equation*}
\sum_{i=1}^{(n+1)/2} (|x^{\uparrow}_i - w| + |x^{\uparrow}_{n-i+1} - w|).
\end{equation*}
In case $w$ lies at a distance $c$ of the interval $I_i = [x^{\uparrow}_i,x^{\uparrow}_{n-i+1}]$, then the $i$th term in the summation is equal to $2c + |x^{\uparrow}_i - x^{\uparrow}_{n-i+1}|$. 
The point $y^t$ lies at distance $d$ from $M^t(y^{t-1}) = [x^{\uparrow}_{\lfloor n/2 \rfloor}, x^{\uparrow}_{\lfloor n/2 \rfloor + 1}] = I_{(n+1)/2}$, the interval corresponding to the last term of the above summation. The point $z$ lies in all intervals $I_1, \ldots, I_{(n+1)/2}$ of the above summation, and this establishes the claim for odd $n$.
\end{proof}
The following lemma is proved using the former.
\begin{lemma}\label{lem:inthemedian}
Let $y = (y^1, \ldots, y^T)$ be a solution to $I$. Suppose that there is a Stage $t$ such that $y^t$ is not in $M^t(y^{t-1})$. Then, replacing $y^t$ with the nearest point $\tilde{y}^t$ to $y^t$ that lies in $M^t(y^{t-1})$ results in a solution with a cost that is at most $C(y)$.
\end{lemma}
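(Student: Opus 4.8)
The plan is to localise the effect of the swap to as few cost terms as possible, and then play the stage-$t$ saving guaranteed by \refLemma{lem:easystuff} against the extra facility-movement cost that the swap can introduce at the following stage. Write $y' = (\tilde{y}^t, y^{-t})$ for the solution obtained after the replacement, set $z = \tilde{y}^t$ (the nearest point of $M^t(y^{t-1})$ to $y^t$), and let $d = |y^t - z|$.

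First I would observe that $y^t$ enters $C(y)$ only through the two summands $C^t$ and $C^{t+1}$: the term $C^t(y) = |y^{t-1}-y^t| + \sum_i |x_i^t - y^t|$ depends on $y^t$ directly, while $C^{t+1}(y) = |y^t - y^{t+1}| + \sum_i |x_i^{t+1} - y^{t+1}|$ depends on $y^t$ only through the movement term $|y^t - y^{t+1}|$. Every other summand $C^s$ with $s \notin \{t, t+1\}$ is literally unchanged when passing from $y$ to $y'$. (If $t = T$ there is simply no term $C^{t+1}$, and the argument only becomes easier.) Consequently
\begin{equation*}
C(y') - C(y) = \bigl(C^t(y') - C^t(y)\bigr) + \bigl(C^{t+1}(y') - C^{t+1}(y)\bigr).
\end{equation*}

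Next I would bound the two brackets separately. For the first, since $\tilde{y}^t = z \in M^t(y^{t-1})$, \refLemma{lem:easystuff} (precisely, the inequality $C^t(y) \ge C^t((z,y^{-t})) + d$ obtained in its proof) gives $C^t(y') - C^t(y) \le -d$; that is, relocating the facility into the median saves at least $d$ at Stage $t$. For the second bracket, the only affected quantity is the movement distance, and the triangle inequality yields
\begin{equation*}
C^{t+1}(y') - C^{t+1}(y) = |z - y^{t+1}| - |y^t - y^{t+1}| \le |z - y^t| = d.
\end{equation*}
Adding the two estimates gives $C(y') - C(y) \le -d + d = 0$, which is exactly the claim.

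The step I expect to demand the most care is the bookkeeping of which cost terms actually depend on $y^t$, together with the correct invocation of \refLemma{lem:easystuff}: that lemma concerns the isolated change of the single coordinate $y^t$ inside $C^t$, so one must be explicit that replacing $y^t$ by $z$ leaves every $C^s$ with $s \notin \{t,t+1\}$ untouched and influences $C^{t+1}$ solely through the facility-movement term. Once this localisation is in place the argument reduces to a one-line cancellation of the saving $d$ against the movement penalty $d$; the only residual subtlety is the boundary case $t = T$, where the penalty term is absent and the cost can only strictly decrease.
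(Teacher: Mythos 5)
Your proof is correct and follows essentially the same route as the paper's: both localise the change to the two terms $C^t$ and $C^{t+1}$, invoke Lemma~\ref{lem:easystuff} to obtain a saving of at least $d$ at Stage $t$, and cancel it against the triangle-inequality bound of $d$ on the extra movement cost at Stage $t+1$ (with the case $t=T$ being trivial). Your explicit appeal to the inequality $C^t(y) \ge C^t((z,y^{-t})) + d$ as actually established in the proof of Lemma~\ref{lem:easystuff}, rather than the reversed inequality appearing in its statement, matches exactly how the paper uses that lemma.
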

\begin{proof}
    We can write the difference in costs of $y$ and $(\tilde{y}^t, y^{-t})$ as follows:
    \begin{align*}
        & C(y) - C((\tilde{y}^t, y^{-t})) \\
        & \qquad = \sum_{u = 1}^T (C^u(y) - C^u((\tilde{y}^t, y^{-t}))) \\
        & \qquad = C^t(y) - C^t((\tilde{y}^t, y^{-t})) + C^{t+1}(y) - C^{t+1}((\tilde{y}^t, y^{-t})) \\
        & \qquad = \sum_{i = 1}^n (|x_i^t - y^t| - |x_i^t - \tilde{y}^t|) + |y^{t-1} - y^t| - |y^{t-1} - \tilde{y}^t| \\
        & \qquad \qquad + \sum_{i = 1}^n (|x_i^{t+1} - y^{t+1}| - |x_i^{t+1} - y^{t+1}|) + |y^{t} - y^{t+1}| - |\tilde{y}^{t} - y^{t+1}| \\
        & \qquad = \sum_{i = 1}^n (|x_i^t - y^t| - |x_i^t - \tilde{y}^t|) + |y^{t-1} - y^t| - |y^{t-1} - \tilde{y}^t| \\
        & \qquad \qquad + |y^t - y^{t+1}| - |\tilde{y}^t - y^{t+1}| \\
        & \qquad = C^t(y) - C^t((\tilde{y}^t, y^{-t})) + |y^t - y^{t+1}| - |\tilde{y}^t - y^{t+1}| \\
        & \qquad \geq |y^t - \tilde{y}^t| + |y^{t} - y^{t+1}| - |\tilde{y}^{t} - y^{t+1}| \\
        & \qquad \geq |y^t - \tilde{y}^t| - |y^t - \tilde{y}^t| \\
        & \qquad = 0.
    \end{align*}
    where in the second-to-last inequality we used Lemma \ref{lem:easystuff}, where in this case $d = |y^t - \tilde{y}^{t}|$. 
\end{proof}

Lemma \ref{lem:inthemedian} yields an easy and efficiently computable optimal mechanism when $n$ is even: An optimal facility reallocation mechanism for $k=1$ always places the facility at Stage $t \in [T]$ in the median interval $M^t(y^{t-1})$. Hence, when the number of agents is even, the optimal allocation vector is unique and can be computed in $O(nT)$ (i.e., linear) time.

For $n$ odd, the above does not yet characterise the optimal mechanism, and it turns out that in this case the facility cannot be placed at just \emph{any} point in the median without sacrificing optimality. This is due to the fact that the median $M^t(y^{t-1})$ of Stage $t$ is dependent on the location $y^{t-1}$ of the facility of the previous stage, and is therefore by recursion also dependent on the location the facility and all the agents at all previous stages. Because $M^t(y^{t-1})$ is generally an interval of points instead of a single point, there is a choice to be made that influences the medians of all the subsequent stages. 

The following two example instances show that the optimal choice of facility at a given stage may depend on the locations of the agents in the next stage.
\begin{figure}[t]
	\begin{center}
		\includegraphics[scale=1]{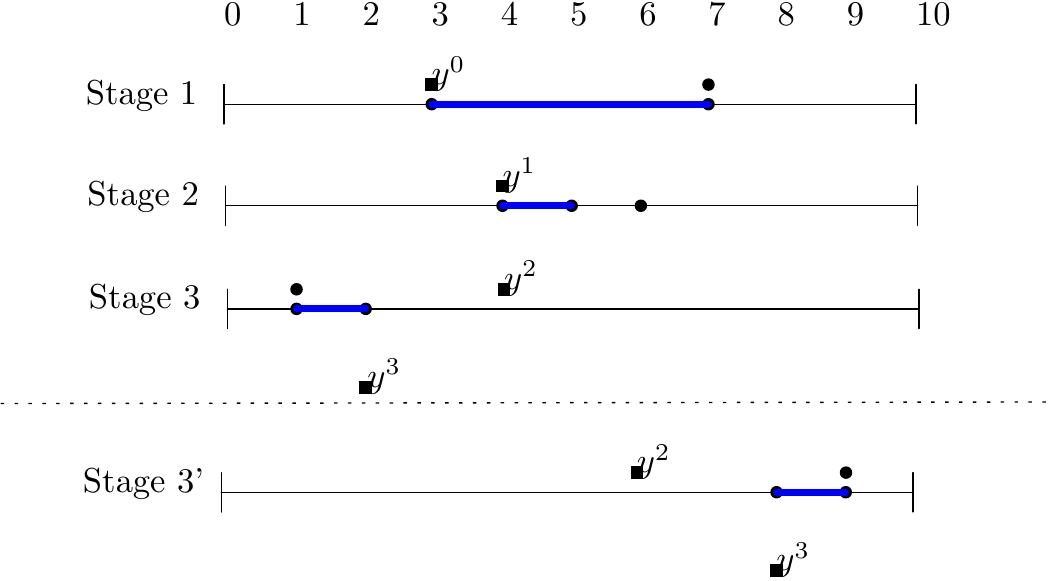}\caption{Depiction of the two facility reallocation instances of Example \ref{exa:1}, one consisting of Stages 1,2,3, and the other consisting of stages 1,2,$3'$. The dots indicate the locations of the agents at each stage. The squares indicate an optimal choice of facility locations, where the square at a given stage is the facility location at the previous stage. (At the first stage it is the starting location.) The square below the final stage is the facility location at the last stage. The blue part of the line at Stage $t$ represents the median $M^t(y^{t-1})$. The bottom part of the figure illustrates that in case Stage 3 of this instance would be replaced with Stage 3', then the facility placement at stages two and 3 would need to be chosen diffently than the solution presented in this figure for the original instance. }\label{fig:1}
	\end{center}
\end{figure}

\begin{example}\label{exa:1}
Consider first the following example with $T=3$ stages and $n=3$ agents, depicted in Figure \ref{fig:1}. Let $y^0 = 3$ be the initial facility location. The locations of the agents at each of the $3$ stages are $x^1 = (3,7,7), x^2 = (4,5,6), x^3 = (1,1,2)$. The median in the first stage is the interval $[3,7]$. The point in this median that we choose for $y^1$ influences the median in the second stage:
\begin{itemize}
	\item When we set $y^1 \in [3,4]$, the median in the second stage will be $[4,5]$;
	\item When $y^1 \in (4,5]$, the median in the second stage will be $[y^1,5]$;
	\item When $y^1 \in [5,6)$, the median in the second stage will be $[5,y^1)$;
	\item When $y^1 \in [6,7]$, the median in the second stage will be $[5,6]$.	
\end{itemize}

The optimal solution is to set $y^1 \in [4,5]$, and to not move the facility to a different location in the second stage. 
This is the best tradeoff to minimising the second stage's cost while keeping the facility close to the agent locations in the third stage so that the third stage's cost is also kept small. 

However, if in Stage $3$ the facilities of the three agents would be $\bar{x}^3 = (8,9,9)$, then the optimal choice of facility location for the first stage would be to set $y^1 \in [5,6]$. \qed
\end{example}

The above examples show that there may be infinitely many optimal solutions when $n$ is odd. The analysis also suggests that it may always be optimal to put the facility at any given stage at the location of the central agent of the subsequent stage, whenever that is possible. We can prove this, and in fact we can refine this statement further, as follows.

\begin{theorem}
   \label{thm:optimal-position}
Suppose that in instance $I$ it holds that $n$ is odd. There exists an optimal solution $y$ for this instance such that:
\begin{itemize}
	\item at any Stage $t \in [T-1]$, the facility is placed at the point in $M^t(y^{t-1})$ that lies closest to the location of the middle agent at the subsequent Stage $t+1$, i.e., the median of $\{x_1^{t+1},\ldots,x_n^{t+1}\}$. 
	\item At Stage $T$, the facility is placed anywhere in $M^t(y^{t-1})$.
\end{itemize}
\end{theorem}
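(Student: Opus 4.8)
The plan is to treat the problem as a convex dynamic program and reduce the whole statement to a single per-stage optimality claim about the cost-to-go. Write $S^t(w) := \sum_{i=1}^n |x_i^t - w|$ for the within-stage service cost and $m^{t+1}$ for the median of $\{x_1^{t+1},\dots,x_n^{t+1}\}$, which is a single point (indeed an agent's location) since $n$ is odd. By \refLemma{lem:inthemedian} I may restrict attention to solutions with $y^t \in M^t(y^{t-1})$ for every $t$, and on this interval the stage cost $h_t(w) := |y^{t-1}-w| + S^t(w)$ attains its minimum — this is exactly the defining property of $M^t(y^{t-1})$, cf. \refLemma{lem:easystuff} — while $h_t$ has slope at least $2$ immediately outside the interval. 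Define the cost-to-go $V^{t+1}(w)$ as the optimal total cost of stages $t+1,\dots,T$ given that the facility sits at $w$ after stage $t$, with $V^{T+1}\equiv 0$; it obeys the Bellman recursion $V^{t+1}(w) = \min_v\bigl(|w-v| + S^{t+1}(v) + V^{t+2}(v)\bigr)$. The theorem will follow once I show that, for each $t\in[T-1]$, the point $\mathrm{proj}_{M^t(y^{t-1})}(m^{t+1})$ minimizes $h_t + V^{t+1}$: the forward-greedy solution so defined is then globally optimal by the principle of optimality, and at the last stage $V^{T+1}\equiv 0$ leaves $y^T$ free anywhere in $M^T(y^{T-1})$.

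First I would record two easy facts about $V^{t+1}$: it is \emph{convex} (partial minimization of the jointly convex Bellman objective) and \emph{$1$-Lipschitz} (immediately from $V^{t+1}(w)\le |w-w'| + V^{t+1}(w')$, obtained by first moving from $w$ to $w'$). The heart of the argument is then the claim that $m^{t+1}\in\arg\min_w V^{t+1}(w)$. Since taking $v=w$ in the recursion gives $\arg\min V^{t+1}\supseteq \arg\min\bigl(S^{t+1}+V^{t+2}\bigr)$, it suffices to prove $0\in\partial\bigl(S^{t+1}+V^{t+2}\bigr)(m^{t+1}) = \partial S^{t+1}(m^{t+1}) + \partial V^{t+2}(m^{t+1})$.

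This subgradient inclusion is the main obstacle, and it is precisely where the hypothesis that $n$ is odd is used. Letting $L$, $p$, $R$ denote the numbers of stage-$(t+1)$ agents strictly left of, exactly at, and strictly right of $m^{t+1}$, a direct computation gives $\partial S^{t+1}(m^{t+1}) = [\,L-R-p,\; L-R+p\,]$. Because $n$ is odd, $m^{t+1}$ is the middle agent, so at least $(n+1)/2$ agents lie weakly on each side of it; counting yields $L\le (n-1)/2$ and $R\ge (n+1)/2 - p$, hence $|L-R|\le p-1$, so this interval contains $[-1,1]$ (the kink of $S^{t+1}$ at its median has total slope jump at least $2$, using $p\ge 1$). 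Since $V^{t+2}$ is $1$-Lipschitz, $\partial V^{t+2}(m^{t+1})\subseteq[-1,1]$, and a one-line Minkowski-sum check then gives $0\in\partial S^{t+1}(m^{t+1})+\partial V^{t+2}(m^{t+1})$, establishing $m^{t+1}\in\arg\min V^{t+1}$. Intuitively, the service cost pulls with slope at least $1$ on each side of the median, which no $1$-Lipschitz future cost can overcome; for even $n$ the counting bound $|L-R|\le p-1$ fails, which is exactly why that case needs the separate, simpler characterization.

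Finally I would assemble the per-stage conclusion. On $M^t(y^{t-1})=[L_t,R_t]$ the function $h_t$ is constant with slope at least $2$ just outside, while $V^{t+1}$ is convex with a minimizer at $m^{t+1}$. A short subgradient check at $w^\star=\mathrm{proj}_{[L_t,R_t]}(m^{t+1})$, split into the cases $m^{t+1}\in[L_t,R_t]$, $m^{t+1}>R_t$, and $m^{t+1}<L_t$, shows $0\in\partial(h_t+V^{t+1})(w^\star)$, the point being that the slope $\ge 2$ of $h_t$ outside $[L_t,R_t]$ dominates the slope in $[-1,1]$ of $V^{t+1}$ and pins the joint minimizer at the projection. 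Hence $w^\star$ globally minimizes $h_t+V^{t+1}$, which is the Bellman-optimal choice of $y^t$ and coincides with the point of $M^t(y^{t-1})$ closest to $m^{t+1}$. Choosing $y^t=w^\star$ for all $t\in[T-1]$ and any $y^T\in M^T(y^{T-1})$ then produces the optimal solution asserted by the theorem.
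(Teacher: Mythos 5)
Your proof is correct. It shares the paper's backward-induction skeleton --- both arguments ultimately rest on the fact that the optimal cost-to-go from a handoff point is $1$-Lipschitz and non-decreasing as that point moves away from the next stage's middle agent --- but the formalisation is genuinely different. The paper proves this fact as an \emph{auxiliary claim} inside an induction on $T$ (shifting $y^0$ a distance $d$ away from the middle agent of the first stage never lowers the optimum and raises it by at most $d$), established by a three-case analysis of where the next middle agent sits relative to the median interval, and then combines it with the observation that the stage cost is constant on $M^1(y^0)$. You instead set up the Bellman value function $V^{t+1}$, prove it is convex (a fact the paper never states or needs) and $1$-Lipschitz, locate its minimiser at $m^{t+1}$ via the subgradient inclusion $[-1,1]\subseteq\partial S^{t+1}(m^{t+1})$, and pin the per-stage optimiser at the projection by a subgradient-sum check. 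Your route buys two things: the paper's case analysis collapses into a uniform subgradient verification, and the role of the parity of $n$ becomes completely explicit (the slope jump of $S^{t+1}$ at its median is at least $2$ exactly because $n$ is odd, which no $1$-Lipschitz future cost can overcome --- this is precisely what fails for even $n$). The paper's route, in exchange, avoids convex-analysis machinery and proves only the directional monotonicity it actually uses. I checked the individual steps --- the counting bound $|L-R|\le p-1$, the slope-at-least-$2$ claim for $h_t$ just outside $M^t(y^{t-1})$ (which holds because the multiset $X^t\cup\{y^{t-1}\}$ has even cardinality $n+1$), the inclusion $\arg\min(S^{t+1}+V^{t+2})\subseteq\arg\min V^{t+1}$, and the three-case verification at the projection --- and they all go through, including the degenerate case where $M^t(y^{t-1})$ is a single point.
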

\begin{proof}
    We assume without loss of generality (by possibly renaming the agents at each stage) that $x_i^t \leq x_{i+1}^{t}$ for all $i \in [n-1]$ and all $t \in [T]$ so that $x_{\lceil n/2 \rceil}^t$ is the location of the middle agent for each Stage $t$.
    We also assume without loss of generality that the starting location $y^0$ is located at the right of the middle agent $x_{\lceil n/s \rceil}$. We prove by induction on the number of stages $T$ that the claim holds. We additionally prove at each stage of our inductive proof the following \emph{auxiliary claim}: 
    \begin{itemize}
    	\item Changing the instance $I$ by moving the starting location $y^0$ a distance $d$ further to the right (i.e., away from the middle agent $x_{\lceil n/s \rceil}^1$ at Stage 1) does not decrease the optimal cost, and will increase the optimal cost by at most $d$.
    \end{itemize} 
	
	Our induction basis is when $T=1$, where our claim follows from the definition of $C$: Letting $x^{\uparrow}$ denote the non-decreasingly ordered list $X^1 \cup \{y_0\}$, the distance between the facility and any pair of points $x^{\uparrow}_i, x^{\uparrow}_{n-i+1}$ (where $i \in [(n+1)/2]$) is minimised when $y^1$ is in between these two points, and placing the facility anywhere in $M^1(y^0)$ ensures that the facility is placed in between all these pairs. The cost is then equal to the length of all the intervals. Moreover, changing the instance by moving the starting location $y^0$ an amount of $d$ to the right can only lengthen the set of intervals, and will increase the total length of the intervals by $d$. This shows that the base case holds.
    
    Suppose now that the claim, including the auxiliary claim, holds for all instances with $T = U$ stages. We prove that it also holds when $I$ has $T=U+1$ stages. 
    
    Consider the subinstance of $I$ restricted to stages $[U+1]\setminus\{1\}$ after fixing the facility location at stage one to any location $z^1$. Denote this subinstance by $J(z^1)$. We denote the optimal cost of $J$ by $C^*(J(z^1))$. By the induction hypothesis, we may assume that every solution vector for $J$ satisfying the properties of the claim, is optimal for $J$ (i.e., attains cost $C^*(J(z^1))$). We denote by $s(J(z^1))$ such an optimal solution vector.
    
    The cost $C(y)$ of instance $I$ for a given solution $y$ can now be decomposed as the sum of the cost incurred by Stage $1$ and the cost of $(y_2,\ldots, y_n)$ on subinstance $J(y^1)$. If $y$ is an optimal solution, then taking $y$ and replacing the entries $(y^2, \ldots, y^n)$ by $s(J(y^1))$ is also an optimal solution for $I$. The location $y^1$ in the optimal solution should thus satisfy that $C^1(y^0, y^1) + C^*(J(y^1))$ is minimised. By Lemma \ref{lem:inthemedian}, it holds that $y^1 \in M^1(y^0)$, and by the auxiliary claim of the induction hypothesis, $C^*(J(y^1))$ decreases as $y^1$ gets closer to $x_{\lceil n/2 \rceil}$. Furthermore, the term $C^1(y^0, y^1)$ is constant in its argument $y^1$ on the subdomain $M^1(y^0)$. Altogether this means that $y^1$ is the point in $M^1(y^0)$ closest to $x_{\lceil n/2 \rceil}^1$. This proves that $(y^1, s(J(y^1)))$ is an optimal solution for $I$ that satisfies the properties of the main claim, and concludes the proof of the main claim of the induction step.
    
    What remains is to establish the auxiliary claim. To that end, let $\tilde{I}$ be an instance where the starting location $\tilde{y}^0$ lies at a distance $d$ to the right of $y^0$, and where $\tilde{I}$ is otherwise identical to $I$. Through analogous reasoning as above, we infer that there is a solution $\tilde{y} = (\tilde{y}^1, s(J(\tilde{y}^1)))$ for instance $\tilde{I}$ that satisfies the properties of the main claim. Because $x_{\lceil n/s \rceil}^1 \leq y^0 \leq \tilde{y}^0$, we have the inclusion $M^1(y^0)\subseteq M^1(\tilde{y}^1)$, and these two medians share the same leftmost endpoint. We now distinguish three cases.
    \begin{itemize}
    \item If $x_{\lceil n/2 \rceil}^2 \in M^1(y^0)$, then also $x_{\lceil n/2 \rceil}^2 \in M^1(\tilde{y}^0)$, which means that both $y^1$ and $\tilde{y}^1$ are placed at location $x_{\lceil n/2 \rceil}^2$. This means that in both instances $I$ and $\tilde{I}$ the contribution to the cost by the subinstance $J$ is the same, and the only difference in cost is caused by the cost contribution of Stage 1. The difference in cost contribution of Stage 1 among the two instances is $d$, which follows from analogous reasoning as in the proof of the base case of the induction. This means that the total cost difference between optimal solutions for instances $I$ and $\tilde{I}$ is $d$, and this establishes the auxiliary claim for this case.
    \item If $x_{\lceil n/2 \rceil}^2$ lies strictly to the left of $M^1(y^0)$, which means that also in this case, $y^1$ and $\tilde{y}^1$ are the same location, which is the leftmost endpoint of $M^1(y^0)$ (and $M^1(\tilde{y}^0)$). For reasons analogous to the former case, it follows that the total cost difference between optimal solutions for instances $I$ and $\tilde{I}$ is $d$.
    \item In the last case, $x_{\lceil n/2 \rceil}^2$ lies strictly to the right of $M^1(y^0)$.  Because in the two instances, in the first stage, the facilities are placed at the point in the median closest to $x_{\lceil n/2 \rceil}^2$, it holds for the subinstances $J(y^1)$ and $J(\tilde{y}^1)$ that the starting positions are at distance at most $d$ from each other, and that the starting positions both lie on the same side of the middle agent location $x_{\lceil n/2 \rceil}^2$ at the first stage of these subinstances. Thus, the auxiliary claim of the induction hypothesis applies to the resulting pair of subinstances and we may conclude that in these subinstances the optimal costs differ by at most $d$, where $C^*(J(\tilde{y}^1)) \leq C^*(J(y))$. Next, we observe that the optimal cost of $I$ in the first stage is exactly $d$ less than the optimal cost of $\tilde{I}$ in the first stage, so we conclude that the optimal cost of $I$ is still at most the optimal cost of $\tilde{I}$, and is at most $d$ less. This completes the proof of the auxiliary claim of the induction step.
    \end{itemize} 
\end{proof}

The following corollary summarises all of the above.
\begin{corollary}\label{cor:median2}
It is an optimal facility reallocation mechanism for $k=1$ to place the facility at each stage $t \in [T-1]$ at the point in the median interval $M^t(y^{t-1})$ that lies closest to the middle agent of Stage $t+1$, and to place the facility at Stage $T$ at any point in the median interval. Hence, when the number of agents is even, the optimal allocation vector is unique and can be computed by an online mechanism. When the number of agents is odd, the optimal mechanism needs to look at each stage $t \in [T-1]$ at the agent locations in Stage $t$ and $t+1$ only. The mechanism runs in both cases in $O(Tn)$ time.
\end{corollary}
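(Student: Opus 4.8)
The plan is to assemble the corollary from the results already established, treating the even and odd cases separately, and then to verify the two computational claims. For the optimality assertion, the odd case is exactly \refTheorem{thm:optimal-position}, so nothing new is needed there. For the even case I would first observe that $X^t \cup \{y^{t-1}\}$ has $n+1$ elements, an odd number, so its median $M^t(y^{t-1})$ is a single point; consequently the prescription ``place the facility at the point of $M^t(y^{t-1})$ closest to the middle agent of Stage $t+1$'' collapses to simply ``place the facility at $M^t(y^{t-1})$'', and the separate clause for Stage $T$ becomes vacuous. Optimality of this placement then follows from \refLemma{lem:inthemedian}: any solution can be transformed, without increasing its cost, into one whose $t$-th coordinate lies in $M^t(y^{t-1})$ for every $t$, and because each median is a single point this transformation has a unique target, namely the solution defined recursively by $y^t = M^t(y^{t-1})$ from the fixed $y^0$.

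For the uniqueness claim in the even case I would argue that this median-following solution is both well-defined and forced. It is well-defined because $y^1 = M^1(y^0)$ is determined, which in turn determines $y^2 = M^2(y^1)$, and so on down the chain. The delicate point — and the step I expect to require the most care — is to upgrade the non-strict inequality of \refLemma{lem:inthemedian} to a strict one whenever $y^t \notin M^t(y^{t-1})$, so that \emph{every} optimal solution coincides with the median-following one. Concretely, one has to check, via the stage-$t$ cost decomposition used in \refLemma{lem:easystuff}, that displacing $y^t$ away from the single median point strictly increases the total cost rather than being exactly absorbed by a compensating decrease of the movement term at Stage $t+1$; the worrying configurations are those where the median lies in the interior of the overlapping pair-intervals, since there the inequality chain of \refLemma{lem:inthemedian} can become tight.

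The online and look-ahead statements I would read off directly from the structure of the optimal solution. In the even case $y^t = M^t(y^{t-1})$ depends only on $y^{t-1}$ and the current multiset $X^t$, so the location for Stage $t$ can be emitted as soon as $X^t$ is revealed, which is precisely the online property. In the odd case \refTheorem{thm:optimal-position} makes $y^t$ depend additionally on the location of the middle agent of Stage $t+1$; hence computing $y^t$ needs $X^t$ together with a single order statistic of $X^{t+1}$, i.e., a look-ahead of exactly one stage and nothing further.

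Finally, for the $O(Tn)$ bound I would observe that at each stage the mechanism needs only (i) the median of the $n+1$ points $X^t \cup \{y^{t-1}\}$ and (ii) the middle agent of $X^{t+1}$, both of which are order statistics computable in $O(n)$ time by linear-time selection rather than by sorting. Each stage therefore costs $O(n)$, and summing over the $T$ stages gives $O(Tn)$. The one point to be careful about here is to invoke linear-time selection explicitly, so as not to incur a spurious logarithmic factor from re-sorting each stage.
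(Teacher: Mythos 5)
Your assembly of the corollary follows the same route as the paper, which offers no separate proof: the even case is read off from \refLemma{lem:inthemedian} (each median $M^t(y^{t-1})$ is a singleton when $n$ is even, since $X^t\cup\{y^{t-1}\}$ has an odd number of points), the odd case is exactly \refTheorem{thm:optimal-position}, and the online/look-ahead/$O(Tn)$ observations are obtained just as you describe. Those parts of your proposal are fine.

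The one step you defer --- upgrading \refLemma{lem:inthemedian} to a strict inequality so that \emph{every} optimal solution coincides with the median-following one --- is a genuine gap, and in fact that step cannot be carried out: the literal uniqueness claim fails. Take $n=2$, $T=2$, $y^0=5$, $x^1=(0,10)$, $x^2=(8,8)$. The median-following solution is $(y^1,y^2)=(5,8)$ with cost $13$, but every $(y^1,8)$ with $y^1\in[5,8]$ also has cost $13$: the stage-one excess of displacing $y^1$ to the right of the median point $5$ is exactly $|y^1-5|$ (the displaced point remains inside the pair-interval $[0,10]$, which is precisely the tight configuration you flagged), and it is exactly cancelled by the saving $|y^1-5|$ in the movement term at stage two. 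So ``unique'' has to be read as the paper intends it: the \emph{prescription} determines a unique output vector, because each $M^t(y^{t-1})$ is a single point and no tie-breaking choice arises --- in contrast with the odd case, where a point inside an interval must be selected. Under that reading the non-strict \refLemma{lem:inthemedian} suffices and your argument is complete; you should simply drop the attempt at strictness rather than try to supply it. (A minor further caution: \refLemma{lem:easystuff} as printed states $d \geq C^t(y) - C^t((z,y^{-t}))$, whereas its proof establishes $C^t(y) - C^t((z,y^{-t})) \geq d$; the latter is the direction you --- correctly --- rely on.)
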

Thus, for $n$ odd, we can compute the optimum efficiently, but we do need a one stage ``look-ahead''. Thus, this result does not imply an optimal \emph{online} mechanism. We give in Section \ref{sec:online} an online mechanism with an optimal competitive ratio.

\section{The Weighted Problem with Multiple Facilities} 
Next, we consider a generalised variant of the problem where there are $k \geq 1$ facilities and the agents have weights. The cost of an agent $i \in [n]$ is their distance to the nearest facility at each stage, and their weight $w_i \in \mathbb{R}_{\geq 0}$ is the factor by which their cost contributes to the cost function.

The problem of computing the optimal facility locations for such a generalised instance is considerably more complex. We prove that nonetheless, when the number of facilities $k$ is fixed, this can be done in polynomial time, when $k$ is fixed.
\begin{theorem}\label{thm:polytimeoffline}
There exists a mechanism that computes the optimal solution to a generalised facility reallocation problem in time $O(T^2(2\max\{Tn,k\})^{k+1})$.
\end{theorem}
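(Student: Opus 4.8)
The plan is to reduce the continuous optimisation to a finite dynamic program (DP) over a polynomially-sized set of candidate facility positions, and then to evaluate this DP efficiently by exploiting the separability of the movement cost on the line.

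First I would prove a discretisation lemma: there is an optimal solution in which, at every stage, each of the $k$ facilities sits at a point of the candidate set $P = \{x_i^t : i \in [n],\ t \in [T]\} \cup \{y_1^0,\ldots,y_k^0\}$, so that $|P| \le Tn + k \le 2\max\{Tn,k\} =: m$. The argument fixes an optimal solution and freezes, at each stage, both the assignment of each weighted agent to its serving facility and the left-to-right identity of the facilities (on a line the cheapest movement between two configurations is the sorted matching, so we may assume facilities never cross). With these combinatorial choices frozen, the total cost becomes piecewise linear in the vector of all facility positions, with breakpoints only at points of $P$ and at coincidences of facility positions. Taking any maximal cluster of positions that share a common non-candidate value and translating the whole cluster uniformly changes the cost linearly; optimality forces the cost to be locally constant in the translation, so we may slide the cluster until one of its positions reaches a point of $P$ or merges with another cluster, strictly reducing the number of non-candidate positions. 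Iterating yields an optimal solution supported entirely on $P$.

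Next I would set up the DP. Call a \emph{configuration} a sorted $k$-multiset $S = (s_1 \le \cdots \le s_k)$ drawn from $P$; there are $\binom{|P|+k-1}{k} = O(m^k)$ of them. For a stage $t$ define the service cost $\mathrm{serv}_t(S) = \sum_{i=1}^n w_i \min_{j \in [k]} |x_i^t - s_j|$, and for two configurations the movement cost $\mathrm{move}(S',S) = \sum_{j=1}^k |s'_j - s_j|$, which equals the optimal (sorted) matching cost on the line. Letting $V_t(S)$ be the minimum total cost over stages $1,\ldots,t$ subject to the facilities being at $S$ in stage $t$, we get $V_t(S) = \mathrm{serv}_t(S) + \min_{S'}\bigl(V_{t-1}(S') + \mathrm{move}(S',S)\bigr)$, with $V_0$ forced by the initial configuration $(y_1^0,\ldots,y_k^0)$, and the overall optimum equal to $\min_S V_T(S)$. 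Correctness follows from the discretisation lemma together with the optimality of sorted matching.

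The running-time bottleneck, and the main obstacle, is the transition: evaluated naively the inner minimisation costs $O(m^k)$ per configuration, i.e.\ $O(m^{2k})$ per stage, which is too slow for $k \ge 2$. To keep the exponent at $k+1$ rather than $2k$ I would compute the entire map $S \mapsto \min_{S'}\bigl(V_{t-1}(S') + \mathrm{move}(S',S)\bigr)$ by $k$ successive one-dimensional $L^1$ distance transforms, one per coordinate, using that $\mathrm{move}$ is separable as $\sum_j |s'_j - s_j|$; each such transform over the $O(m)$ possible values of a single coordinate is computed by the standard forward/backward sweep in linear time. The delicate part is to propagate the monotonicity constraints $s'_j \le s'_{j+1}$ (and the analogous constraints on the output configuration) through the sweeps, so that the procedure only ever produces sorted configurations; this is handled by restricting each coordinate's sweep to the admissible range determined by the already-fixed neighbouring coordinates. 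Combining this faster transition over the $T$ stages with the per-stage evaluation of the service costs yields the stated bound $O(T^2(2\max\{Tn,k\})^{k+1})$.
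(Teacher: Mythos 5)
Your proposal is correct and follows the same two-step skeleton as the paper -- a discretisation lemma restricting facilities to the candidate set $P = \{x_i^t\} \cup \{y_1^0,\ldots,y_k^0\}$, followed by a stage-by-stage dynamic program over sorted $k$-tuples from $P$ -- but it differs in both ingredients in ways worth noting. For the discretisation lemma, the paper fixes the \emph{largest} stage $t$ with an off-candidate facility and argues via the sign of the one-sided rates of change of the cost (weights of agents served on each side plus indicator terms for the adjacent stages' facility positions) that the facility can be pushed to the nearest candidate point $p_l$ or $p_r$ without increasing cost; your cluster-sliding argument over the piecewise-linear cost with frozen assignments reaches the same conclusion and is, if anything, cleaner because it avoids the case analysis on whether $y_j^{t+1} = y_j^t$. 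The more substantive divergence is in the DP transition. The paper evaluates the transition naively, minimising over all $O(m^k)$ predecessor configurations for each successor configuration, and its own runtime accounting charges this minimisation only once per stage rather than once per table entry; taken literally, the naive DP costs on the order of $m^{2k}$ per stage, exactly the obstacle you identify. Your separable $L^1$ distance-transform trick (one forward/backward sweep per coordinate, with $V_{t-1}$ set to $+\infty$ on unsorted tuples so that the sortedness constraint needs no special handling in the sweeps) computes the whole transition map in $O(k\,m^k)$ per stage and is what genuinely delivers an $m^{k+1}$-type bound. So your route is not merely an alternative: on the transition step it is the more careful one, and the ``delicate'' monotonicity issue you flag dissolves once you observe that only the domain of $V_{t-1}$, not the minimisation itself, needs restricting.
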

As mentioned in Section \ref{sec:related}, the paper \citey{fotakisetal} provides a strongly related and important result for the special case of this problem where all weights are equal: In this case, the authors show that there exists an algorithm that is not only polynomial in $T$ and $n$, but also polynomial in $k$. Their algorithm can be generalised to handle the case where the objective is a weighted sum of the total movement cost and the total distance of the agents to the facility. However, their algorithm is not applicable to arbitrarily weighted agents, which the algorithm presented here is suitable for. 
\begin{proof}[Proof of Theorem \ref{thm:polytimeoffline}.]
The main insight that we need is that it suffices to consider only solutions where at each stage $t \in [T]$ each facility is placed on a location corresponding to one of the agent locations (at any stage) or to one of the starting facility locations $y_1^0, \ldots, y_k^0$.
\begin{lemma}\label{lem:prevpositions}
Let $y = (y^1, \ldots, y^T)$ be a solution to a generalised facility location instance (where $y^t$ are $k$-dimensional vectors) with $T$ stages. There exists a solution $\tilde{y}$ such that $C(\tilde{y}) \leq C(y)$ and for all $t \in [T]$ and $j \in [k]$, it holds that 	$y_j^t \in \{y_1^0, \ldots y_k^0\} \cup X^1 \cup \cdots \cup X^T$.
\end{lemma}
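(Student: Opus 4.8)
The plan is to eliminate the troublesome minimum over facilities by fixing an assignment, then snap each facility's trajectory onto the candidate set $P := \{y_1^0,\dots,y_k^0\}\cup X^1\cup\cdots\cup X^T$ one \emph{block} of stages at a time. Writing the generalised cost as $C(y)=\sum_t\sum_j|y_j^{t-1}-y_j^t|+\sum_t\sum_i w_i\min_{j}|x_i^t-y_j^t|$, I would first introduce, for any assignment $\sigma$ with $\sigma_i^t\in[k]$ naming the facility serving agent $i$ at stage $t$, the function $C_\sigma(y)=\sum_t\sum_j|y_j^{t-1}-y_j^t|+\sum_t\sum_i w_i|x_i^t-y_{\sigma_i^t}^t|$, which is convex and piecewise linear. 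Taking $\sigma$ to be the assignment induced by $y$ (each agent to a nearest facility), we have $C_\sigma(y)=C(y)$, while $C(z)\le C_\sigma(z)$ for every vector $z$. Hence it suffices to produce $\tilde y$ with all coordinates in $P$ and $C_\sigma(\tilde y)\le C_\sigma(y)$, since then $C(\tilde y)\le C_\sigma(\tilde y)\le C_\sigma(y)=C(y)$.

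Next I would exploit that, under a fixed $\sigma$, each agent's term involves a single facility, so $C_\sigma$ decomposes as a sum over facilities, $C_\sigma(y)=\sum_j f_j(y_j)$ with $f_j(y_j)=\sum_t|y_j^{t-1}-y_j^t|+\sum_t D_j^t(y_j^t)$, where $D_j^t(z)=\sum_{i:\sigma_i^t=j}w_i|x_i^t-z|$ is convex piecewise linear with all its kinks in $X^t\subseteq P$. Each facility's trajectory $y_j=(y_j^1,\dots,y_j^T)$ (with $y_j^0$ fixed) can therefore be optimised independently.

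The core step is this single-facility snapping, and it is also the main obstacle: one cannot snap a coordinate $y_j^t$ in isolation, because the movement terms $|y_j^{t-1}-y_j^t|$ and $|y_j^t-y_j^{t+1}|$ couple it to its neighbours, so a one-coordinate minimiser may sit at $y_j^{t-1}$ or $y_j^{t+1}$, which need not lie in $P$. To resolve this I would take an optimal trajectory for $f_j$, pick any stage whose value $v$ is not in $P$, and let $[a,b]$ be the maximal run of consecutive stages all equal to $v$. Shifting $y_j^a,\dots,y_j^b$ together by a common $\delta$ keeps the internal movement terms at zero; since $v\notin P$ it is not a kink of any $D_j^s$, and since both boundary neighbours differ from $v$ (by maximality of the run, or because $y_j^0\in P$), the boundary movement terms are smooth at $\delta=0$. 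Thus $f_j$ is affine in $\delta$ near $0$, and optimality forces this slope to be $0$. I can then slide the whole block at no cost until the first event: $v+\delta$ reaches a kink of some $D_j^s$ (a point of $X^s\subseteq P$), or it meets a neighbouring stage's value and merges with it (which includes reaching $y_j^0$). In the former case the block lands in $P$; in the latter the block either absorbs a neighbour already in $P$ (done) or strictly grows.

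Since coordinates already in $P$ are never moved and each merge strictly reduces the number of maximal out-of-$P$ runs while never increasing the count of out-of-$P$ coordinates, this terminates after finitely many steps with every $y_j^t\in P$. The only point needing extra care is the degenerate case where an entire block carries no assigned agents, so that $f_j$ stays flat along the shift indefinitely; there the block's position is cost-irrelevant and can simply be set to $y_j^0\in P$. Reassembling the snapped per-facility trajectories yields the desired $\tilde y$ with $C(\tilde y)\le C(y)$ and all coordinates in $P$.
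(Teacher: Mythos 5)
Your argument is correct, but it takes a genuinely different route from the paper's. The paper proceeds by single-coordinate exchange: it picks the highest stage $t$ at which some facility $j$ lies outside the candidate set, takes the nearest candidates $p_l,p_r$ on either side, and argues from the signs of the one-sided rates of change (the derivative of $C$ in $y_j^t$ is monotone on $[p_l,p_r]$, so the restriction attains its minimum at an endpoint) that $y_j^t$ can be moved to $p_l$ or $p_r$ without increasing the cost; this forces case distinctions on whether $t=T$ and whether $y_j^{t+1}=y_j^t$, and it snaps to a candidate set that also contains the solution's own earlier facility positions, so a further (implicit) induction is needed to land in the set stated in the lemma. You instead linearise the $\min$ by fixing the assignment induced by $y$, which makes the surrogate objective $C_\sigma$ convex and separable over facilities, take a per-facility optimum, and translate maximal constant blocks of stages, using the zero-slope-at-an-optimum observation to slide each block at no cost until it hits a kink of some $D_j^s$ (a point of $P$) or merges with a neighbouring run. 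Your route buys convexity (so the coupling terms $|y_j^{t-1}-y_j^t|$ are handled by moving whole runs rather than isolated coordinates, with no concavity or case analysis needed) and reaches the stated candidate set $P$ directly; the paper's route is more local, works on the given solution without passing through a global per-facility optimiser, and gives a stage-by-stage structural picture. Both establish the lemma; your termination argument (out-of-$P$ runs only merge or land in $P$) and your treatment of the agent-free block are sound.
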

\begin{proof}
    Given a Stage $t \in [T]$, let $X^{\leq t}$ denote the union of the set of agent positions up to Stage $t$ and and the set of all facility locations up to Stage $t$, i.e.
    \begin{equation*}
    X^{\leq t} = \{x_i^s : i \in [n], s \in [t]\} \cup \{y_1^s, \ldots, y_k^s : s \in [t] \cup \{0\}\}.
    \end{equation*}
    It suffices to show that in case $y$ does not satisfy that $y_j^t \in X^{\leq t}$ for all $t \in [T]$ and $j \in [k]$ then we can change one of the locations $y_j^t \not\in X^{\leq t}$, with $t \in [T]$, $j \in [k]$, to a location $\tilde{y}_j^t$ such that $\tilde{y}_j^t \in X^{\leq t}$ while not increasing the cost.
    
    Consider therefore the highest $t \in [T]$ for which there exists a $j \in [k]$ such that $y_j^t \not\in X^{\leq t}$. We now consider two possible alternative locations for facility $j$ at Stage $t$: the points $p_l$ and $p_r$ in $X^{\leq t}$ closest to $y_j^t$ to the left and right of $y_j^t$ respectively. We prove that moving facility $j$ to one of these points will not decrease the total cost, which suffices to prove the claim.
    
    Moving the facility $j$ at Stage $t$ to a point different from $y_j^t$ may affect the cost contribution coming from $n+2$ sources: the distance to the closest facility of the $n$ agents at Stage $t$, the distance by which the facility moves from Stage $t-1$ to Stage $t$, and the distance by which the facility moves from Stage $t$ to Stage $t+1$.
    
    Let $S_l \subseteq [n]$ be the set of agents to the left of $y_j^t$ for which facility $j$ is the unique closest facility at Stage $t$ under its current location $y_j^t$, and let $W_l$ be the total weight of these agents. Likewise, Let $S_r \subseteq [n]$ be the set of agents to the right of $y_j^t$ for which facility $j$ is the unique closest facility at Stage $t$ under its current location $y_j^t$, and let $W_r$ be the total weight of these agents. Moving the facility to the left of its current location $y_j^t$ will increase the distance of the facility to the agents of $S_r$ at rate $W_r$ and decrease the distance to the agents of $S_l$ at rate $W_l$. Moreover, doing so will increase the distance to $y_j^{t-1}$ if $y_j^{t-1} < y_j^t$ and decrease the distance to $y_j^{t-1}$ otherwise. The same holds for the distance to $y_j^{t+1}$ (if $t+1 \in [T]$). A symmetric observation holds for moving the facility to the right of $y_j^t$. 
    
    Therefore, if $t+1 \in [T]$ and $y_j^{t+1} \not= y_j^t$, moving the facility to point $p_l$ will not increase the cost if
    \begin{equation*}
    W_l + \mathbf{1}[y_j^{t-1} < y_j^t] + \mathbf{1}[y_j^t < y_j^{t+1}] \leq W_r + \mathbf{1}[y_j^{t-1} \geq y_j^t] + \mathbf{1}[y_j^t \geq y_j^{t+1}],
    \end{equation*}  
    where $\mathbf{1}[\cdot]$ denotes the indicator function that maps to $1$ if the provided argument holds, and maps to $0$ otherwise.
    and moving the facility to point $p_r$ will not increase the cost if the above inequality holds in the opposite direction. This is the case because:
    \begin{itemize}
        \item The difference between the right and left hand side of the inequality is the rate of change in the cost function when moving the facility to the right at point $y_j^t$, i.e., this difference is the partial derivative of $C$ with respect to $y_j^t$. 
        \item The partial derivative of $C$ with respect to $y_j^t$ is monotone in the interval $[p_l,p_r]$, because moving the facility in the direction of $p_l$ can only cause agents left of $p_l$ to possibly join facility $j$ and can only cause any Agent $i$ to the right of $p_l$ to possibly drop facility $j$ and to join an alternative facility that is closer to $i$. Thus, when $C$ is viewed as a function of $y_j^t$ restricted to the interval $[p_l,p_r]$ with the remaining coordinates fixed, then $C$ is maximised at $p_l$ or $p_r$.
    \end{itemize} 
    
    If, on the other hand, $t+1 \in [T]$ and $y_j^{t+1} = y_j^t$, as the rates of change in the cost function when moving left and right respectively, are slightly different (because in this case $y_j^{t+1} \not \in X^t$). In this case, moving the facility to point $p_l$ will not increase the cost if
    \begin{equation*}
    W_l + \mathbf{1}[y_j^{t-1} < y_j^t] - 1 \leq W_r + \mathbf{1}[y_j^{t-1} \geq y_j^t] - 1,
    \end{equation*}  
    and moving the facility to point $p_r$ will not increase the cost if the above inequality holds in the opposite direction. Lastly, if $t = T$, then $y_j^{t+1}$ is not an influencing factor in the rates of change in cost as a consequence of moving the facility to the left and right respectively. Therefore, moving the facility to point $p_l$ will not increase the cost if
    \begin{equation*}
    W_l + \mathbf{1}[y_j^{t-1} < y_j^t] \leq W_r + \mathbf{1}[y_j^{t-1} \geq y_j^t],
    \end{equation*}
    and moving the facility to point $p_r$ will not increase the cost if the above inequality holds in the opposite direction. This condition is equivalent to the previous case.
\end{proof}



Using this lemma, a polynomial time algorithm with the claimed runtime can be constructed through standard dynamic programming techniques: For each possible vector $\bar{y}$ of starting facility locations that are in the set expressed in Lemma \ref{lem:prevpositions} (there are at most $k + (Tn)^k$ of them by the above lemma), we can efficiently find a solution to a subinstance $I$ on stages $t, \ldots, T$ with facility starting positions $\bar{y}$, by considering the optimal solutions to the subinstaces on stages $t+1, \ldots, T$ with all possible starting positions, and using the one that minimises the cost for $I$. 

We now provide the details of this construction. Let $X$ be the set of locations expressed in Lemma \ref{lem:prevpositions}. We also define the functions 
\begin{equation}\label{eqn:abbrevC}
C^{>t} = \sum_{u=t+1}^T C^t \qquad \text{ and } \qquad C^{\leq t} = \sum_{u=1}^t C^t ,
\end{equation}
representing the total cost contributed by stages after $t$ and up to $t$, respectively. We may abuse notation and use as the argument $y$ provided to $C^{>t}$ only the $T-t+1$ vectors of facility locations for Stages $t, \ldots, T$. Likewise, we may provide to $C^{\leq t}$ only the $t$ vectors of facility locations for Stages $1,\ldots, t$.

Let $Y$ denote the set of all vectors $y^t$ of locations for the $k$ facilities at any Stage $t \in [T]$, such that $y_j^t \in X$ for all $j \in [k]$ and $y_j^t \leq y_{j+1}^t$ for $j \in [k-1]$. Observe that there are $\binom{|X|}{k} \leq (nT + k)^k$ such vectors. For $y^t \in Y$, denote by $y^{> t}(y^t) = (y^t, (y^{>t})^{t+1}, \ldots, (y^{>t})^T)$ a sequence of $T-t+1$ facility location vectors that minimises the cost generated by the Stages $t+1, \ldots, T$, given that the vector of facility locations at Stage $t$ is $y^t$. That is, $y^{>t}(y^t)$ minimises the function $C^{>t}$ defined in (\ref{eqn:abbrevC}).

Observe that our mechanism needs to compute $y^{>0}(y^0)$, because $C^{>0} = C$. 
These definitions, together with Lemma \ref{lem:prevpositions}, imply an efficient way to compute for a given $y^t \in Y$ the optimal placement $y^{>t}(y^t)$ of the facilities of the subsequent stages, provided that we have computed the optimal locations $y^{>t+1}(y^{t+1})$ for all $y^{t+1} \in Y^{t+1}$:
\begin{equation*}
y^{>t}(y^t) \in \arg_{(y^{t+1}, y^{>t+1}(y^{t+1}))}\max\{C^{>t}((y^{t+1}, y^{>t+1}(y^{t+1}))) : y^{t+1}  \in Y\}.
\end{equation*}
Lemma \ref{lem:prevpositions} implies that the above expression is true, as it states that indeed it suffices to consider only the vectors $y^{t+1}  \in Y$ in the above max-epression.

Since the summation by which $C^{>t}$ is defined consists of at most $Tn + Tk$ terms that each take $O(k)$ time to compute, and the max-expression above is over a set of $|Y| \leq (nT+k)^k$ values, an appropriate vector $y^{>t}(y^t)$ can be computed in time $O(T(n+k)(Tn+k)^k)$ from the values $y^{>t+1}(y^{t+1}), y^{t+1} \in Y$. Thus, proceeding by standard dynamic programming, it is possible to compute $y^{\geq 0}(y^0)$ in time $O(T(n+k)(Tn+k)^k \cdot T) \subseteq O(T^2(2\max\{Tn,k\})^{k+1})$.
\end{proof}

\section{The Online Setting}\label{sec:online}
In this section we study again the basic facility reallocation problem with a single facility, and we focus on the online variant of the problem, where for each stage $t \in [T]$ the agent locations $x_1^{t+1}, \ldots, x_n^{t+1}$ of the next stage may only be read by the mechanism after the mechanism outputs the facility location $y_1^t$ for the current stage. 

We are interested in finding an online mechanism with an as good as possible \emph{competitive ratio}, which is defined as the ratio of the cost of the solution generated by the online mechanism and the cost of the optimal solution (i.e., the solution generated by the optimal offline mechanism). 

Corollary \ref{cor:median2} points out that for an even number of agents it is optimal to put the facility in the median at each stage, which is a single point. Since this can be done in an online fashion, this mechanism suffices for the case of even $n$, and achieves a competitive ratio of $1$. For odd $n$, 

For odd $n$, Corollary \ref{cor:median2} states that the optimal facility placement at any given stage depends on the location of the middle agent at the subsequent stage, which means that the optimal online mechanism necessarily does not achieve a competitive ratio of $1$. The following example shows that due to the lack of ability to look one stage ahead, any optimal online mechanism cannot achieve a competitive ratio better than $(n+2)/(n+1)$.

\begin{figure}[t]
	\begin{center}
		\includegraphics[scale=0.6]{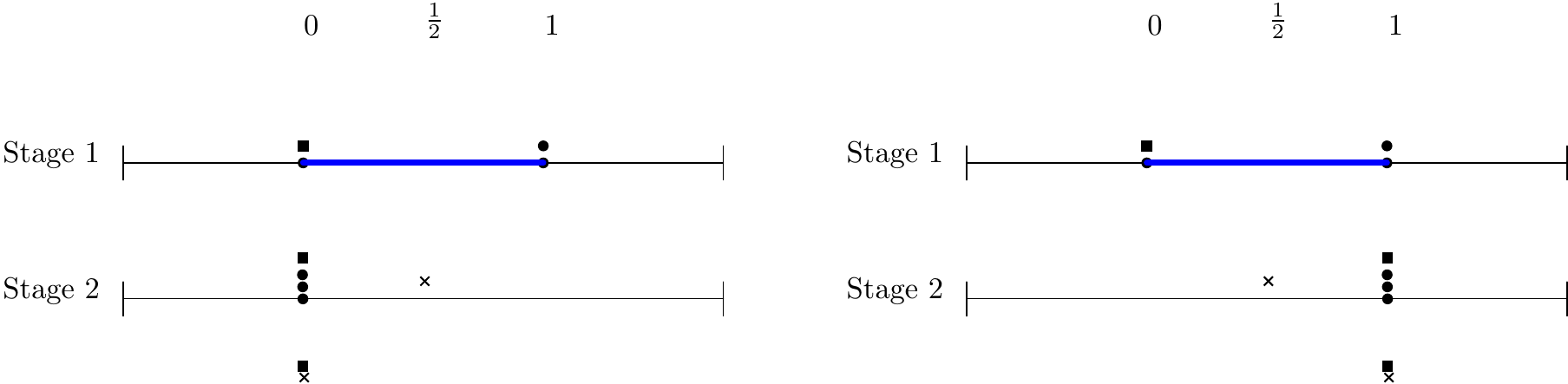}\caption{Depiction of the two facility reallocation instances (one on the left side, and one on the right side) of Example \ref{exa:2}, for the case of $\ell = 1$ (i.e., there are $n=3$ agents). The notation we use is the same as in Figure \ref{fig:1}. The square indicates, at a given stage, the previous stage's optimal location of the facility. The cross indicates, at a given stage, the previous stage's location of the facility that the optimal online mechanism would choose, where the final facility locations are depicted below the final stage. The examples differ in Stage 2, which causes the optimal facility location chosen in Stage 1 to be distinct among the two instances. An optimal online facility location, on the contrary, would choose to set the facility in Stage $1$ at the middlepoint of the median interval $M^1(y^0)$, as this choice minimizes the maximum cost caused by the subseqent stage, of which the online mechanism does not know the precise agent locations when it has to choose the facility location in the first stage.}\label{fig:2}
	\end{center}
\end{figure}

\begin{example}\label{exa:2}
Consider the following two instances $I_{\ell}$ and $I_{\ell}'$, wheree both instances have $2\ell + 1$ agents, for any $\ell \in \mathbb{N}$. Both instances have $T = 2$ stages. Figure \ref{fig:2} depicts the instances for $\ell = 1$. The agent locations of Stage $1$ are $0$ for the first $\ell$ agents and $1$ for the remaining $\ell+1$ agents. At Stage $2$, all agents are located at $0$ in instance $I_{\ell}$, and at $1$ in instance $I_{\ell}'$. The initial facility location is $0$ in both instances. 

The median $M^1(y^0)$ of the first stage is $[0,1]$, so by Corollary \ref{cor:median2} the optimal solution is to place the facility at $0$ in Instance $I_{\ell}$ and at $1$ in Instance $I_{\ell}$. In Stage $2$, the facility then does not need to move.

However, as the instances differ only in the second stage, an online mechanism is restricted to place the facility at the same position in Stage $1$, in both instances. Placing the facility at $1/2$ is the best that any online mechanism can choose, to minimise the maximum cost among those two instances. Therefore the cost of the optimal solution is $\ell+1$ for both instances, while the cost of the solution generated by the optimal online mechanism is $\ell + 3/2$. The ratio of these two quantities is $(n+2)/(n+1)$. This is a lower bound on the competitive ratio achievable by an online mechanism.
\qed
\end{example}

We now provide an online mechanism of which the competitive ratio matches the lower bound on the competitive ratio of Example \ref{exa:2}. The key idea behind this online mechanism is to try to place the facility at each stage as close as possbile to the location where the optimum facility may be placed. While it is impossible to know the exact location of the optimum facility at the current stage $t$, an optimal mechanism can nonetheless derive at each stage the precise interval in which the optimum location may lie: The online mechanism can compute the precise optimum location $\tilde{y}^{t-1}$ at Stage $t-1$ (as it has access to the agent locations of Stage $t$), and by Corollary \ref{cor:median2} the optimal location at Stage $2$ can lie at any point in $M^t(\tilde{y}^{t-1})$, depending on the next stage. Our online mechanism will therefore place the facility $y^t$ at the point $M^t(y^{t-1})$ that lies as close as possible to the middle point of $M^t(\tilde{y}^{t-1})$.

\begin{definition}
	Define Mechanism $A$ as follows: For each Stage $t \in T$, compute the optimum facility location $\tilde{y}^{t-1}$ at Stage $t-1$. Set $y^t$ to the location in $M^t(y^{t-1})$ closest to the middle point of the interval $M^t(\tilde{y}^{t-1})$.
\end{definition}

Regarding the runtime of $A$, note that $\tilde{y}^{t-1}$ can be computed from $\tilde{y^{t-2}}$ in $O(n)$ time, and the point in $M^t(y^{t-1})$ closest to the middle point of $M^t(\tilde{y}^{t-1})$ can be computed in $O(n)$ time as well. We thus obtain the following corollary.
\begin{corollary}
	Mechanism $A$ is an online mechanism that runs in $O(n)$ time per stage, and thus takes $O(Tn)$ (i.e., linear) time in total.
\end{corollary}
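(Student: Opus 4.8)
The plan is to verify the two timing claims already sketched in the paragraph preceding the corollary, and then sum over the $T$ stages. Concretely, I would show that (i) the offline-optimum location $\tilde{y}^{t-1}$ at Stage $t-1$ can be obtained from the previously computed value $\tilde{y}^{t-2}$ in $O(n)$ time, and (ii) given $\tilde{y}^{t-1}$ and the facility location $y^{t-1}$, the output $y^t$ of Mechanism $A$ can be computed in $O(n)$ time. Since these are the only operations performed at each stage, the per-stage running time is $O(n)$, and summing over the $T$ stages yields a total of $O(Tn)$.

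For claim (i), I would invoke Corollary~\ref{cor:median2}, which states that $\tilde{y}^{t-1}$ is the point of the median interval $M^{t-1}(\tilde{y}^{t-2})$ lying closest to the middle agent $x_{\lceil n/2\rceil}^t$ of Stage $t$. The interval $M^{t-1}(\tilde{y}^{t-2})$ is delimited by the (one or two) middle points of the multiset $X^{t-1}\cup\{\tilde{y}^{t-2}\}$, and the middle agent of Stage $t$ is the median of $X^t$; both can be located by a single linear-time selection among at most $n+1$ reals, and projecting the resulting point onto the interval is an $O(1)$ operation. Crucially, I would also note that when Mechanism $A$ must output $y^t$ it has already read $X^1,\ldots,X^t$, so the agent locations of Stage $t$ needed to evaluate $\tilde{y}^{t-1}$ are available; this is exactly what makes the mechanism genuinely online despite the one-stage dependence exhibited in Corollary~\ref{cor:median2}. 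The mechanism keeps $\tilde{y}^{t-2}$ as a running scalar between iterations, so no past stage is ever recomputed.

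For claim (ii), the same elementary ingredients suffice: $M^t(y^{t-1})$ and $M^t(\tilde{y}^{t-1})$ are each the median interval of a multiset of $n+1$ reals and hence computable in $O(n)$ time by linear-time selection, taking the midpoint of an interval is $O(1)$, and projecting that midpoint onto $M^t(y^{t-1})$ is again $O(1)$. (When $n$ is even each median degenerates to a single point and these steps only become simpler.) Combining (i) and (ii) gives $O(n)$ work per stage, and summing over $t\in[T]$ yields the claimed $O(Tn)$ total.

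The only point requiring care, rather than a genuine obstacle, is to insist on linear-time median selection instead of sorting the $n+1$ points afresh at every stage, since naive sorting would inflate the per-stage bound to $O(n\log n)$. The remaining content is bookkeeping: confirming that all quantities consumed at Stage $t$ --- namely $\tilde{y}^{t-2}$, $y^{t-1}$, $X^{t-1}$, and $X^t$ --- are either carried over from the previous iteration or already revealed to the online mechanism.
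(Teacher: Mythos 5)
Your proposal is correct and follows essentially the same route as the paper, whose entire justification is the one-sentence remark preceding the corollary asserting exactly your claims (i) and (ii); you merely flesh out the details (linear-time median selection rather than sorting, and the bookkeeping of which quantities are available to the online mechanism at Stage $t$), all of which is consistent with the paper's intent.
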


A more challenging task is to prove that the competitive ratio of this mechanism is optimal. That is, it matches the lower bound of Example \ref{exa:2}. We will prove this in the remainder of this section.

\begin{theorem}\label{thm:optonline}
	Mechanism $A$ has competitive ratio $(n+2)/(n+1)$ on instances with an odd number of $n$ agents. That is, let $I = (n,T,y^0,x)$ be an instance where $n$ is odd, let $y$ be the output solution of $A$ and let $\tilde{y}$ be the optimal solution. It holds that
	\begin{equation*}
	\frac{C(y)}{C(\tilde{y})} \leq \frac{n+2}{n+1}. 
	\end{equation*}
\end{theorem}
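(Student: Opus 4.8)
The plan is to compare the online solution $y$ produced by $A$ with the optimal solution $\tilde y$ of Corollary~\ref{cor:median2} stage by stage, exploiting that both place the facility inside a median interval. I would first derive a closed form for the per-stage cost. Since $n$ is odd, the service function $S^t(w):=\sum_{i=1}^n|x_i^t-w|$ is uniquely minimised at the middle agent $m^t:=x^t_{\lceil n/2\rceil}$ and has slope of absolute value at least $1$ everywhere else; hence, for any previous location $z$, an elementary infimal-convolution argument gives $\min_w\bigl(S^t(w)+|z-w|\bigr)=S^t(m^t)+|z-m^t|$, as it is never cheaper to service from a point other than $m^t$ and transport from $z$ to it. Because OPT places $\tilde y^t\in M^t(\tilde y^{t-1})$ (Corollary~\ref{cor:median2}) and $A$ places $y^t\in M^t(y^{t-1})$ by construction, each attains this minimum, so $C^t(\tilde y)=S^t(m^t)+|\tilde y^{t-1}-m^t|$ and $C^t(y)=S^t(m^t)+|y^{t-1}-m^t|$. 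Subtracting, $C^t(y)-C^t(\tilde y)=|y^{t-1}-m^t|-|\tilde y^{t-1}-m^t|$, a quantity depending only on the two previous positions.

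Next I would prove the lower bound $C(\tilde y)\ge\frac{n+1}{2}\sum_t L_t$, where $L_t:=|M^t(\tilde y^{t-1})|$ is the length of OPT's stage-$t$ median interval. Ordering the multiset $X^t\cup\{\tilde y^{t-1}\}$ of $n+1$ points non-decreasingly and pairing smallest-with-largest as in the proof of Lemma~\ref{lem:easystuff}, $C^t(\tilde y)$ equals the sum of the $(n+1)/2$ nested pair-gaps. These gaps are non-increasing from the outermost to the innermost pair, and the innermost gap --- between the two central points --- is exactly the median length $L_t$ (recall that $M^t(z)$ always has $m^t$ as one endpoint). Thus each of the $(n+1)/2$ gaps is at least $L_t$, giving $C^t(\tilde y)\ge\frac{n+1}{2}L_t$, and summing over $t$ yields the bound.

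The crux is the matching upper bound $C(y)-C(\tilde y)\le\frac12\sum_t L_t$, which I would obtain by an amortised argument with the potential $\Phi^t:=|y^t-\tilde y^t|$ (so $\Phi^0=0$). The goal is the per-stage inequality $\bigl[C^t(y)-C^t(\tilde y)\bigr]+\Phi^t-\Phi^{t-1}\le\frac{L_t}{2}$, which telescopes to $C(y)-C(\tilde y)\le\frac12\sum_t L_t+\Phi^0-\Phi^T\le\frac12\sum_t L_t$. To prove it I would put $m^t$ at the origin and use the structural fact that $M^t(y^{t-1})$ and $M^t(\tilde y^{t-1})$ share the endpoint $m^t$, their other endpoints $a_t,b_t$ being the $1$-Lipschitz clamps of $y^{t-1},\tilde y^{t-1}$ into the interval spanned by the agents ranked $(n-1)/2$ and $(n+3)/2$; thus $L_t=|b_t|$ and $A$'s target is the midpoint $b_t/2$. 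Substituting the stage-cost identity from the first step, together with $y^t=\mathrm{proj}(b_t/2,[0,a_t])$ and $\tilde y^t\in[0,b_t]$, reduces the claim to an elementary one-dimensional inequality. I expect this reduction to be the main obstacle, since closing it requires a case analysis on the signs of $a_t,b_t$ relative to $m^t$ and on whether $A$'s target midpoint falls inside its own (possibly shorter) median interval. The mechanism that makes every case succeed is a cancellation: precisely when $A$ is forced off the midpoint so that $\Phi^t$ would grow, the term $|y^{t-1}-m^t|-|\tilde y^{t-1}-m^t|$ turns correspondingly negative, so a carried-over discrepancy never produces a net increase beyond $L_t/2$; the bound is met with equality exactly in the two-stage configuration of Example~\ref{exa:2}.

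Finally I would combine the two bounds. If $\sum_t L_t=0$ then every stage cost of $y$ and $\tilde y$ coincides and the ratio is $1$; otherwise
\begin{equation*}
\frac{C(y)}{C(\tilde y)}=1+\frac{C(y)-C(\tilde y)}{C(\tilde y)}\le 1+\frac{\tfrac12\sum_t L_t}{\tfrac{n+1}{2}\sum_t L_t}=1+\frac{1}{n+1}=\frac{n+2}{n+1},
\end{equation*}
which is exactly the claimed competitive ratio.
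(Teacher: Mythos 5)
Your proposal is correct in its architecture and, after checking, the one step you flag as the ``main obstacle'' does in fact close. The first half of your argument coincides with the paper's: the paper splits $C^t$ into a \emph{non-median} part (all agents except the middle one), which is constant on the supermedian and hence identical for $y$ and $\tilde y$, and a \emph{residual} part equal to $|z^{t-1}-x^t_{\lceil n/2\rceil}|$; your identity $C^t(z)=S^t(m^t)+|z^{t-1}-m^t|$ is exactly this decomposition, so both proofs reduce to bounding $\sum_t\bigl(|y^{t-1}-m^t|-|\tilde y^{t-1}-m^t|\bigr)$. Where you genuinely diverge is in the amortisation and in the lower bound on OPT. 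The paper classifies each stage into three types, proves a separate proposition for each (Propositions~\ref{prop:type2}--\ref{prop:type1}), partitions $[T]$ into blocks of consecutive type-1 stages, telescopes within blocks (Corollary~\ref{cor:type1}), and finally charges the slack to a convex combination of the supermedian length $\ell^t$ (via $\tfrac{n-1}{2}\ell^t\le C^t_{NM}(\tilde y)$) and the residual cost $C^t_R(\tilde y)$. You instead run a single uniform potential argument with $\Phi^t=|y^t-\tilde y^t|$ and the per-stage budget $L_t/2$, where $L_t=|M^t(\tilde y^{t-1})|$, paired with the lower bound $C^t(\tilde y)\ge\tfrac{n+1}{2}L_t$ coming from the $(n+1)/2$ nested pair-gaps of $X^t\cup\{\tilde y^{t-1}\}$ (whose innermost gap is exactly $L_t$). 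This buys a cleaner, more uniform proof: your per-stage inequality with $L_t=0$ is essentially the paper's Proposition~\ref{prop:type1}, and with $L_t>0$ it subsumes Propositions~\ref{prop:type2} and~\ref{prop:type3}, so the block machinery disappears. The one thing you have not written out is the case analysis establishing $\bigl(|y^{t-1}-m^t|-|\tilde y^{t-1}-m^t|\bigr)+\Phi^t-\Phi^{t-1}\le L_t/2$; verifying it (normalise $m^t=0$, split on the signs of $y^{t-1},\tilde y^{t-1}$ and on whether the target midpoint $b_t/2$ lies in $M^t(y^{t-1})$) confirms it holds in every case, with the cancellation you describe doing exactly the work you predict, so this is a matter of writing rather than a gap. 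To make the submission complete you must supply that case analysis.
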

\begin{proof}
    We assume without loss of generality (by possibly renaming the agents at each stage) that $x_i^t \leq x_{i+1}^{t}$ for all $i \in [n-1]$ and all $t \in [T]$ so that $x_{\lceil n/2 \rceil}^t$ is the location of the middle agent for each Stage $t$.
    
    Note first that by Lemma \ref{lem:inthemedian}, for each Stage $t \in [T]$, it holds that $\tilde{y}^t \in M^t(\tilde{y}^{t-1})$ and by definition of $A$ it also holds that $y \in M^t(y^{t-1})$. At any stage, define the \emph{non-median cost} of a solution $z \in \mathbb{R}^T$ as 
    \begin{equation*}
    C_{NM}^t(z) = \sum_{i \in [n]\setminus \{\lceil n/2 \rceil\}} |x_i^t - z^t|,
    \end{equation*}
    and define the \emph{residual cost} of $z$ as
    \begin{equation*}
    C_{R}^t(z) = |x_{\lceil n/2 \rceil}^t - z^t| + |z^{t-1} - z^t|,
    \end{equation*}
    so that $C^t(z) = C_{NM}^t(z) + C_R^t(z)$ for all $t \in T$ and for all solutions $z$.
    
    The non-median cost can alternatively be written as
    \begin{equation*}
    C_{NM}^t(z) = \sum_{i = 1}^{\lfloor n/2 \rfloor} (|x_i^t - z^t| + |x_{n-i+1}^t - z^t|),
    \end{equation*}
    and from the latter expression it can be seen that $C_{NM}^t$ is minimised and constant when $z^t$ is in the interval $S^t = [x_{\lceil n/2 \rceil - 1}^t, x_{\lceil n/2 \rceil +1}^t]$, which we refer to as the \emph{supermedian} at Stage $t$. The interval $M^t(z^{t-1})$ is always a subset of the supermedian at Stage $t$, regardless of its argument $z^{t-1}$. Hence, solutions $y$ and $\tilde{y}$ achieve the same non-median cost at any stage. 
    
    Given that both $y$ and $\tilde{y}$ place the facility in the median at every Stage $t$, the facility is placed between $x_{\lceil n/2 \rceil}^t$ and the facility location of the previous stage. Therefore, the residual costs for both solutions at any Stage $t \in [T]$ can be written as $C_R^t(y) = |x_{\lceil n/2 \rceil}^t - y^{t-1}|$ and $C_R^t(\tilde{y}) = |x_{\lceil n/2 \rceil}^t - \tilde{y}^{t-1}|$.
    Thus, we may derive that
    \begin{equation}\label{eq:res}
    C(y) - C(\tilde{y}) = \sum_{t = 1}^T (C_R^t(y) - C_R^t(\tilde{y})) = \sum_{t = 1}^T (|x_{\lceil n/2 \rceil}^t - y^{t-1}| - |x_{\lceil n/2 \rceil}^t - \tilde{y}^{t-1}|).
    \end{equation}
    The remainder of this proof will therefore focus on bouding the right hand side of this equation.
    
    Our approach will be as follows. We classify for each stage the behaviour of the mechanism into one of three \emph{types}. A \emph{type 1 stage} is a stage $t \in [T]$ such that $y^{t-1}$ differs from $\tilde{y}^{t-1}$ and lie on opposing sides of $x_{\lceil n/2 \rceil}^t$. Stage $t$ is a \emph{type 2 stage} if $y^{t-1}$ and $\tilde{y}^{t-1}$ lie on the same side of $x_{\lceil n/2 \rceil}^t$, and the middle point of $M^t(\tilde{y}^{t-1})$ is in $M^t(y^{t-1})$. Lastly, $t$ is a \emph{type 3 stage} if $y^{t-1}$ and $\tilde{y}^{t-1}$ lie on the same side of $x_{\lceil n/2 \rceil}^t$, and the middle point of $M^t(\tilde{y}^{t-1})$ is not in $M^t(y^{t-1})$ (implying that $\tilde{y}^{t-1}$ is further away from $x_{\lceil n/2 \rceil}^t$ than $y^{t-1}$). Note that each state is classified into exactly one of the three types. See Figure \ref{fig:3} for a visualisation of the three types.
    
    
    Our goal is for each stage $t$ to provide a useful bound on the difference in distance by which the facilities at Stage $t$ are removed from the middle agent location $x_{\lceil n/2 \rceil}^{t+1}$, because this difference defines the difference in the residual costs at Stage $t+1$. We bound this difference in distances in terms of the length of the supermedian of Stage $t$ and the optimal residual cost at Stage $t$, i.e., $|\tilde{y}^{t} - x_{\lceil n/2 \rceil}^t|$. Furthermore, we will relate the length of the supermedian at Stage $t$ to the total non-median cost of Stage $t$, which will then yield the desired bound on the competitive ratio.
    
    For a Stage $t \in [T]$, let $\ell^t = |x_{\lceil n/2 \rceil - 1}^t - x_{\lceil n/2 \rceil + 1}^t|$ be the length of the supermedian $S^t$ of Stage $t$.
    For each stage type we provide in separate propositions a meaningful bound. We start with a bound for stages $t$ of type 2, indeed given in terms of the length of the $\ell^t$ of Stage $t$ and the $C_R^t(\tilde{y})$. 
    \begin{proposition}\label{prop:type2}
        For a type $2$ stage $t \in [T]$ it holds that $|y^{t} - \tilde{y}^{t}| \leq \frac{n-1}{2(n+1)} \ell^t + \frac{1}{n+1} C_R^t(\tilde{y})$.
    \end{proposition}
    \begin{proof}
        From the definition of a type $2$ stage it follows that the mechanism places the facility at Stage $t$ exactly at the middle point of $M^t(\tilde{y}^{t-1})$. The distance between $y^t$ and $\tilde{y}^t$ is thus at most half of the length of $M^t(\tilde{y}^{t-1})$, and the length of $M^t(\tilde{y}^{t-1})$ is at most $C_R^t(\tilde{y})$, so $|y^t - \tilde{y}^t| \leq \frac12 C_R^t(\tilde{y})$. The length of $M^t(\tilde{y}^{t-1})$ is also at most the length $\ell^t$ of the supermedian $S^t$, because $M^t(\tilde{y}^{t-1}) \subseteq S^t$. Therefore $|y^t - \tilde{y}^t| \leq \frac12 \ell^t$. Taking a convex combination of these two bounds on $|y^t - \tilde{y}^t|$, we obtain the desired bound 
        \begin{equation*}
        |y^{t} - \tilde{y}^{t}| \leq  \frac12\frac{n-1}{n+1}\ell^t + \frac12\left(1 - \frac{n-1}{n+1}\right) C_R^t(\tilde{y}).
        \end{equation*}
    \end{proof}
    Next, we turn to the type 3 stages. For a type $3$ stage $t$, Mechanism $A$ actually yields a solution $y$ with a \emph{better} cost at Stage $t$ than the globally optimal solution $\tilde{y}$. We prove that the distance $|y^t - \tilde{y}^t|$ between the facilities is at most equal to this profit.
    \begin{proposition}\label{prop:type3}
        For a type $3$ stage $t \in [T]$, it holds that $|y^{t} - \tilde{y}^{t}| \leq C_R^t(\tilde{y}) - C_R^t(y)$.
    \end{proposition}
    \begin{proof}
        From the definition of a type $3$ stage and the mechanism, it follows that $y^{t} = y^{t-1}$. Assume without loss of generality that $x_{\lceil n/2 \rceil}^t$ lies to the right of $y^{t-1} = y^t$. We distinguish two cases.
        \begin{itemize}
        	\item If $\tilde{y}^t$ lies to the left of $y^{t}$, then the distance $|y^t - \tilde{y}^t|$ between the facilities at Stage $t$ is less than the distance $|y^{t-1} - \tilde{y}^{t-1}|$ between the facilities at Stage $t-1$, and the latter defines the difference in residual costs $C_R^t(\tilde{y}) - C_R^t(y)$. 
        	\item The other case is when $\tilde{y}^t$ gets placed to the right of $y^t$. Note that $\tilde{y}^t$ is then in the interval $[y^{t-1},x_{\lceil n/2\rceil}]$, which is shorter than the interval $M^t(\tilde{y}^{t-1})\setminus M^t(y^{t-1})$, which is in turn shorter than the interval $[\tilde{y}^{t-1},y^{t-1}]$. The length of the latter interval defines the difference in residual costs $C_R^t(\tilde{y}) - C_R^t(y)$, which also settles the claim for this second case.
        \end{itemize}
    \end{proof}

Lastly, for a type $1$ stage $t$, we do not prove a direct bound on the distance between the facilities. Rather, the following lemma essentially shows that adding the residual cost difference $C_R^{t+1}(y) - C_R^{t+1}(y)$ of the next stage  to the residual cost difference $C_R^{t}(y) - C_R^{t}(y)$ of the current stage yields a quantity that is at most the distance $|y^{t-1} - \tilde{y}^{t-1}|$ between the facilities at the previous stage (where one should observe that $C_R^{t+1}(y) - C_R^{t+1}(y) \geq |y^{t} - \tilde{y}^t|$ in order to understand this interpretation of the lemma's statement).
\begin{proposition}\label{prop:type1}
    For a type $1$ stage $t$, where $t \in [T-1]$, it holds that 
    \begin{equation*}
    |y^{t} - \tilde{y}^{t}| + \max\{0,C_R^{t}(y) - C_R^{t}(\tilde{y})\} \leq |y^{t-1} - \tilde{y}^{t-1}|.
    \end{equation*}
\end{proposition}
\begin{proof}
    The quantity $\max\{0,C_R^{t}(y) - C_R^{t}(\tilde{y})\}$ is at most $C_R^t(y)$, which is equal to the distance $d_1 := |y^{t-1} - x_{\lceil  n/2 \rceil}^t|$ since $y^t \in M^t(y^{t-1})$. Let $d_2 := |x_{\lceil n/2 \rceil}^t - \tilde{y}^{t-1}|$. Because $\tilde{y}^{t-1}$ and $y^{t-1}$ lie on opposite sides of $x_{\lceil n/2 \rceil}^t$, it holds that $d_1 + d_2 = |y^{t-1} - y^t|$. Mechanism $A$ places the facility at Stage $t$ at the point $y^t = x_{\lceil n/2 \rceil}$. The optimum location $\tilde{y}^t$ at Stage $t$ lies in the interval $M^t(\tilde{y}^{t-1})$ which is a subset of $[x_{\lceil n/2 \rceil}^t, \tilde{y}^{t-1}]$. The distance $|y^t - \tilde{y}^t|$ between the two facilities is therefore at most $d_2$. Therefore,
    \begin{equation*}
    |y^{t} - \tilde{y}^{t}| + \max\{0,C_R^{t}(y) - C_R^{t}(\tilde{y})\} \leq d_2 + C_R^t(y) \leq d_2 + d_1 =  |y^{t-1} - \tilde{y}^{t-1}|. 
    \end{equation*}
\end{proof}

We define a \emph{block} of stages $B \subseteq T$ as a maximal set of subsequent stages $t, \ldots, u$ such that Stages $t$ to $u-1$ are all type $1$ stages (and therefore Stage $u$ is a type 2 or type 3 stage). 
Proposition \ref{prop:type1} implies that for such a block $B = \{t, \ldots, u\}$ the total residual cost difference of block $B$ is bounded by the distance between the facility locations $|y^{t-1} - \tilde{y}^{t-1}|$ in the previous stage $t-1$.
\begin{corollary}\label{cor:type1}
    Let $B = \{t,\ldots, u\} \subseteq [T]$ be a block. Then,
    \begin{equation}\label{eq:block}
    \sum_{s = t}^u \max\{0,C_R^s(y) - C_R^s(\tilde{y})\} \leq |y^{t-1} - \tilde{y}^{t-1}|.
    \end{equation}
\end{corollary}
\begin{proof}
    If $B$ is a singleton then (\ref{eq:block}) states a trivial bound that follows from the definition of $C_R$ and the fact that the facility at stage $t$ gets placed in between the previous location and $x_{\lceil n/2 \rceil}^t$ under both $y$ and $\tilde{y}$. Assume now as an induction hypothesis that (\ref{eq:block}) holds for blocks of size $K$. We prove next that (\ref{eq:block}) also holds if $|B| = K+1$. By the induction hypothesis, we have that 
    \begin{equation*}
    \sum_{s = t+1}^u \max\{0,C_R^s(y) - C_R^s(\tilde{y})\} \leq |y^{t} - \tilde{y}^{t}|.
    \end{equation*}
    By Proposition \ref{prop:type1} it holds that
    \begin{equation*}
    \max\{0,C_R^{t}(y) - C_R^{t}(\tilde{y})\} \leq |y^{t-1} - \tilde{y}^{t-1}| - |y^{t} - \tilde{y}^{t}|,
    \end{equation*}
    hence
    \begin{equation*}
    \sum_{s = t+1}^u \max\{0,C_R^s(y) - C_R^s(\tilde{y})\} =  |y^{t} - \tilde{y}^{t}| + |y^{t-1} - \tilde{y}^{t-1}| - |y^{t} - \tilde{y}^{t}| = |y^{t-1} - \tilde{y}^{t-1}|. 
    \end{equation*}
\end{proof}
   
    Let $\{B_1, \ldots, B_K\}$ be the unique partition of $[T]$ into blocks. We refer to $t_k$ as the final stage of block $k \in [K]$, and for notational convenience we define $t_0 = 0$. Let $T_2$ be the subset of stages $\{t_1,\ldots,t_{K-1}\}$ that are type $2$ stages (which are all type $t$ stages in $[T]$), and let $T_3$ be the subset of $\{t_1, \ldots, t_{K-1}\}$ that are type $3$ stages. 
    
    We can bound the total residual cost difference as follows.
    \begin{eqnarray*}
        \sum_{t = 1}^T (C_R^t(y) - C_R^t(\tilde{y})) & \leq & 
        \sum_{t = 1}^T  \max\{0,C_R^t(y) - C_R^t(\tilde{y})\} + \sum_{t \in T_3}  (C_R^t(y) - C_R^t(\tilde{y})) \\
        & \leq & \sum_{k = 1}^K |y^{t_{k-1}} - \tilde{y}^{t_{k-1}}| + \sum_{t \in T_3}  (C_R^t(y) - C_R^t(\tilde{y})) \\
        & = & \sum_{t \in T_2} |y^{t} - \tilde{y}^{t}| + \sum_{t \in T_3} |y^{t} - \tilde{y}^{t}| | + \sum_{t \in T_3}  (C_R^t(y) - C_R^t(\tilde{y})),
    \end{eqnarray*}
	where the first equality follows from the fact that in type 3 stages, $C_R^t(y) \leq C_R^t(\tilde{y})$, and the first inequality follows from Corollary \ref{cor:type1}.
    We will now apply Propositions \ref{prop:type2} and \ref{prop:type3} to the terms in the last summation. We apply Proposition \ref{prop:type2} to the stages in $T_2$ and we apply Proposition \ref{prop:type3} to the stages in $T_3$, so we obtain:
    \begin{eqnarray*}
        \sum_{t = 1}^T (C_R^t(y) - C_R^t(\tilde{y})) & \leq & \sum_{t \in T_2} |y^{t} - \tilde{y}^{t}| + \sum_{t \in T_3} |y^{t} - \tilde{y}^{t}| + \sum_{t \in T_3}  (C_R^t(y) - C_R^t(\tilde{y})) \\
        & \leq & \sum_{t \in T_2}\left(\frac{n-1}{2(n+1)} \ell^t + \frac{1}{n+1} C_R^t(\tilde{y})\right) \\
        & & \qquad + \sum_{t \in T_3} (C_R^t(\tilde{y}) - C_R^t(y)) \\
        & & \qquad + \sum_{t \in T_3}  (C_R^t(y) - C_R^t(\tilde{y})) \\ 
        & \leq & \sum_{t \in T_2} \left(\frac{n-1}{2(n+1)} \ell^t + \frac{1}{n+1} C_R^t(\tilde{y})\right) \\
        & \leq & \sum_{t = 1}^T  \left(\frac{n-1}{2(n+1)} \ell^t + \frac{1}{n+1} C_R^t(\tilde{y})\right).
    \end{eqnarray*}
    
    
    Under both solutions, the facility is placed in the median at every stage, which is contained in the supermedian. There are $(n-1)/2$ agents left of the supermedian at every stage, and there are $(n-1)/2$ agents right of the supermedian at every stage. Therefore, $((n-1)/2)\ell^t \leq C_{NM}^t(\tilde{y})$ for all $t$. We may therefore bound the last expression in the above derivation as follows.
    \begin{eqnarray*}
        \sum_{t = 1}^T (C_R^t(y) - C_R^t(\tilde{y})) & \leq & \sum_{t = 1}^T  \left(\frac{n-1}{2(n+1)} \ell^t + \frac{1}{n+1} C_R^t(\tilde{y})\right) \\
        & \leq & \sum_{t \in [T]}\left(\frac{1}{n+1}C_{NM}^t(\tilde{y}) + \frac{1}{n+1}C_R^{t}(\tilde{y})\right) \\
        & = & \frac{1}{n+1} \sum_{t \in [T]} C^t(\tilde{y}) = \frac{1}{n+1} C(\tilde{y}).
    \end{eqnarray*}
    From the last bound, we can easily derive the claimed competitive ratio and complete the proof:
    \begin{eqnarray*}
        C(y) & = & C(\tilde{y}) + \sum_{t = 1}^T (C^t(y) - C^t(\tilde{y})) \\
        & = & C(\tilde{y}) + \sum_{t = 1}^T (C_R^t(y) - C_R^t(\tilde{y})) \\
        & \leq & C(\tilde{y}) + \frac{1}{n+1}C(\tilde{y}) \\
        & = & \frac{n+2}{n+1} C(\tilde{y}), 
    \end{eqnarray*}
    where the second equality follows from (\ref{eq:res}).
\end{proof}

\begin{figure}[t]
	\begin{center}
		\includegraphics[scale=1]{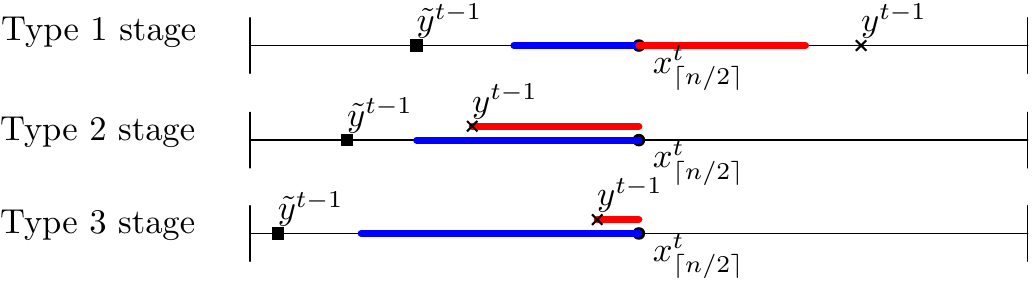}\caption{Depiction of stages belonging to each of the three types. The dot represents the middle agent location $x_{\lceil n/2 \rceil}^t$, the square represents the optimal facility location $\tilde{y}^t$, and the cross represents the facility location $y^{t-1}$ output by the online mechanism. The blue interval represents the median $M^t(\tilde{y}^{t-1})$ associated to the optimal solution, while the red interval represents the median $M^t(y^{t-1})$ associated to the solution output by online mechanism $A$. Agents' locations other than the middle agent location are not depicted. Note that in a type $2$ stage, it may either occur that $y^{t-1}$ is to the left of $\tilde{y}^{t-1}$ or to the right of $\tilde{y}^{t-1}$, although only the latter situation is displayed here.}\label{fig:3}
	\end{center}
\end{figure}

\section{Strategy-proofness}\label{sec:sp}
We investigate in this section the strategy-proofness property of our mechanisms proposed in the previous sections. The results of Moulin \citey{moulin1980strategy} yield a characterisation of the class $\mathcal{C}$ of strategy-proof and group-strategy-proof mechanisms for the classic (single-stage) facility allocation problem: These mechanisms that always place the facility at the median of the union of the set of agent locations and an auxiliary fixed set of points, that are independent of the agent locations.

Unfortunately, the following examples show that the optimal mechanisms for the offline and online settings, which we characterised in Corollary \ref{cor:median2} and Theorem \ref{thm:optonline} respectively, are not group-strategy-proof. This is despite the fact that they can be seen as repeated applications of mechanisms in  $\mathcal{C}$, and this issue can be attributed to the interdependence of the facility locations among the stages.

\begin{example}\label{exa:3}
For even $n$, consider the following instance $I$ with $T=3$ stages and $n=2$ agents, also shown in Figure \ref{fig:4}. The starting location of the facility is $y^0 = 0$. The agent locations are $x^1 = (0, 1)$ and $x^2 = x^3 = (1,0)$. If Agent $1$ does not misreport, their total cost is $2$ under the optimal solution, because the facility will not relocate at all. If Agent $1$ reports instead that she is at location $1$ in in Stage $1$, then the facility will be placed at location $1$ in Stage $1$, and will remain there for the remaining stages, reducing the cost of Agent $1$ by $1$.

For odd $n$, the example is slightly more complex, and is depicted in Figure \ref{fig:5}: Let $T = 4$ and $n=3$. The starting location is $y^0 = 4$. The agent locations are $x^1 = (1,2,5)$, $x^2 = (2,1,4)$, $x^3 = (0,4,5)$, $x^4 = (0,0,0)$. If Agent $1$ does not misreport, then the optimal mechanism of Corollary \ref{cor:median2} outputs the solution $(y^1,y^2,y^3,y^4) = (3,3,3,0)$. This gives agent $1$ a cost $4$. If Agent $1$ instead reports in the first stage that her location is $2$, the vector of facility locations becomes $(2,2,2,0)$, so the cost of Agent $1$ reduces by $1$. Moreover, instance $I'$ demonstrates that Mechanism $A$ of Theorem \ref{thm:optonline} is not strategy-proof: when no agent misreports, the output solution is $(3.5,2.5,3.5,0)$, which yields a total cost of $4.5$ for Agent $1$. If instead, Agent $1$ misreports her location in the first stage as $2$, the output solution becomes $(3,2,3,0)$ which yields Agent $1$ a cost of $3$.
\qed
\end{example}

\begin{figure}[t]
	\begin{center}
		\includegraphics[scale=0.6]{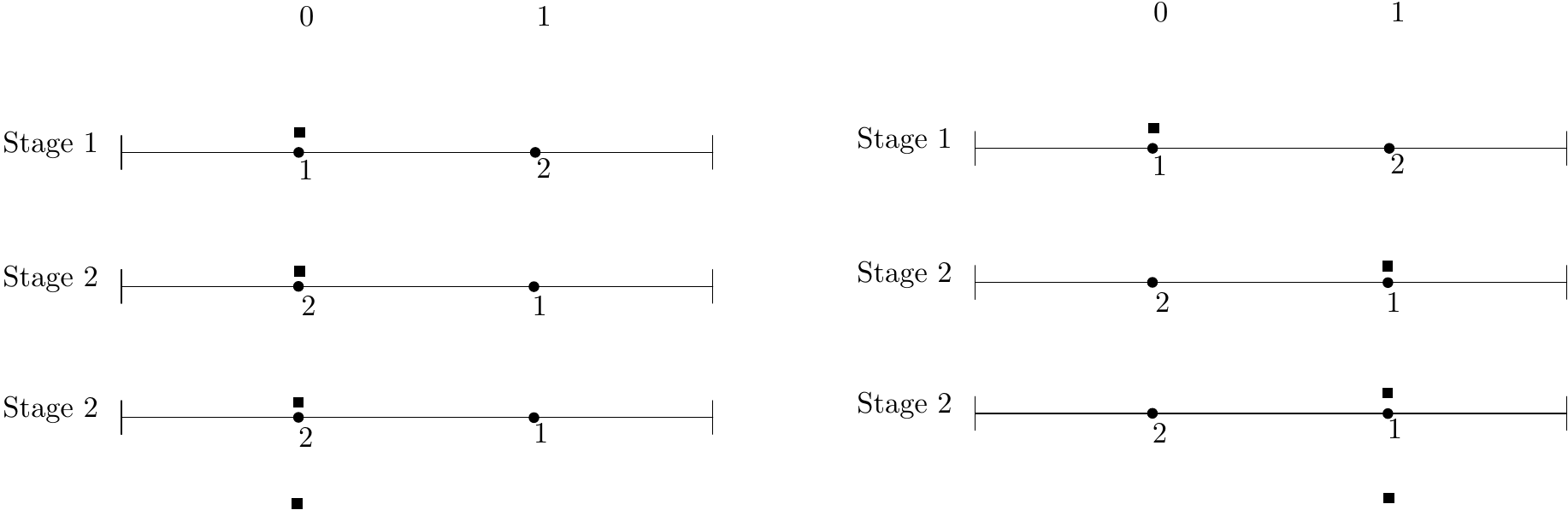}\caption{Depiction of the two agent instance of Example \ref{exa:3}. The notation is identical to previous figures, except that the median interval is not shown, the facility location points are not labeled, and the the agent locations are now labeled with their corresponding agent identities. On the left, the optimal facility locations are depicted when none of the agents misreports. On the right, the optimal facility locations are depicted when Agent $1$ misreports her location in the first stage as $1$.}\label{fig:4}
	\end{center}
\end{figure}

\begin{figure}[t]
	\begin{center}
		\includegraphics[scale=1]{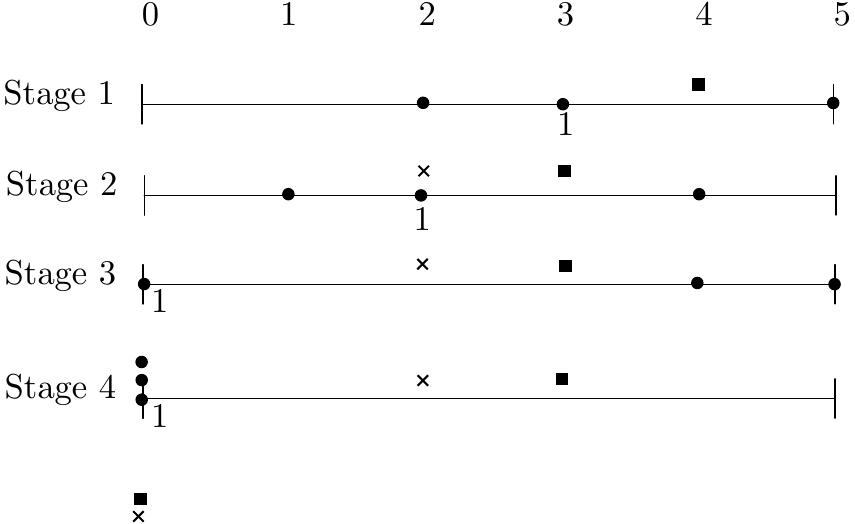}\caption{Depiction of the 3 agent instance of Example \ref{exa:3}. The locations of Agent $1$ are labeled with their identity. The squares depict the optimal facility locations under truthful reporting. Crosses depict the optimal facility locations when Agent $1$ misreports $x_1^1$ as $2$.}\label{fig:5}
	\end{center}
\end{figure}

\begin{corollary}
	The optimal facility reallocation mechanism of Corollary \ref{cor:median2} and the optimal online mechanism of Theorem \ref{thm:optonline} are not strategy-proof. 
\end{corollary}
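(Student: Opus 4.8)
The plan is to prove the corollary by direct appeal to the counterexamples already constructed in \refExample{exa:3}. Recall that a mechanism fails to be strategy-proof as soon as there is one instance, one agent $i$, and one alternative report $\tilde{x}_i$ for which $c_i$ strictly decreases, i.e. $c_i(A(x)) > c_i(A(\tilde{x}_i, x_{-i}))$. Hence it suffices to exhibit such a profitable unilateral deviation for each of the two mechanisms, and for each parity of $n$; the instances in \refExample{exa:3} are designed precisely to supply these. So the content of the proof is verification rather than construction.

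First I would dispatch the offline mechanism of \refCorollary{cor:median2}. For even $n$ I would take the two-agent instance with $y^0 = 0$, $x^1 = (0,1)$, $x^2 = x^3 = (1,0)$, observe that the median mechanism keeps the facility at $0$ throughout (the even-$n$ median is the single middle point, and there is never a reason to leave $0$), giving Agent~$1$ total cost $2$, and then check that Agent~$1$'s Stage-$1$ misreport of $1$ drives the facility to $1$ permanently, lowering her cost to $1$. For odd $n$ I would take the three-agent instance of \refExample{exa:3} and trace the one-stage-lookahead rule of \refCorollary{cor:median2} to confirm it outputs $(3,3,3,0)$ under truthful reporting (cost $4$ for Agent~$1$) and $(2,2,2,0)$ when Agent~$1$ reports $2$ in Stage~$1$ (cost $3$), a strict gain.

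Next I would treat the online mechanism $A$ of \refTheorem{thm:optonline}, reusing the odd instance $I'$ of \refExample{exa:3}. Here I would unfold the definition of $A$ stage by stage: at each Stage $t$ compute the true optimal previous-stage location $\tilde{y}^{t-1}$, form the interval $M^t(\tilde{y}^{t-1})$, and place $y^t$ at the point of $M^t(y^{t-1})$ nearest the midpoint of $M^t(\tilde{y}^{t-1})$. Carrying this out confirms that $A$ returns $(3.5, 2.5, 3.5, 0)$ under truthful reporting and $(3, 2, 3, 0)$ under Agent~$1$'s misreport of $2$ in Stage~$1$, so that $c_1$ drops from $4.5$ to $3$. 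Summing $\sum_{t} |y^t - x_1^t|$ in each case establishes the strict improvement, completing the corollary.

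The only genuine work is bookkeeping, and the sole place where care is needed is the online computation: because $\tilde{y}^{t-1}$ is defined recursively through the optimal solution on earlier stages, the medians $M^t(\cdot)$ and middle-agent locations must be recomputed in order, not in isolation. I expect this recursive dependence to be the main (if modest) obstacle; since $T$ and $n$ are small in the example, however, it reduces to a finite and routine stage-by-stage check rather than any conceptual difficulty.
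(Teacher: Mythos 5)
Your proposal is correct and follows exactly the paper's own route: the corollary is proved in the paper simply by pointing to the counterexamples of Example~\ref{exa:3}, and your stage-by-stage verification of those same instances (cost $2$ versus $1$ for the even case, $4$ versus $3$ for the offline odd case, and $4.5$ versus $3$ for Mechanism $A$) matches the paper's computations. Nothing further is needed.
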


This establishes that there is a gap between the cost generated by the optimal mechanism and the cost generated by the optimal strategy-proof mechanism. The following simple mechanism bounds this gap. It is an online mechanism performing slightly worse than the optimal online mechanism of Theorem \ref{thm:optonline}, though its competitive ratio still tends to $1$ as the number of agents grows.

\begin{theorem}\label{thm:gspmech}
The online mechanism that puts the facility in every stage at the location of the middle agent (breaking ties arbitrarily in case of even $n$, in a way that is independent of the reported agent locations) is group-strategy-proof and has a competitive ratio of $(n+4)/n$ for even $n$, and $(n+3)/(n+1)$ for odd $n$.
\end{theorem}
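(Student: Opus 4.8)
The plan is to establish the two claimed properties separately. For \emph{group-strategy-proofness}, I would invoke the characterisation of Moulin cited at the start of this section: placing the facility at the median of the agent locations (with ties broken in a fixed, report-independent way) is a group-strategy-proof mechanism for the single-stage problem. The key observation is that the mechanism here acts independently at each stage---the facility at Stage $t$ is placed at the median of $\{x_1^t,\ldots,x_n^t\}$, with no dependence on the previous facility location $y^{t-1}$ or on reports from other stages. Since each agent's cost $c_i(y)=\sum_{t=1}^T|y^t-x_i^t|$ decomposes as a sum over stages, and each stage's placement is a group-strategy-proof median rule that an agent can influence only through her Stage-$t$ report, no coalition can make all its members strictly better off; the per-stage group-strategy-proofness lifts to the multi-stage setting by summing. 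Here I must be slightly careful that tie-breaking is genuinely independent of the reports (which is assumed in the statement), since report-dependent tie-breaking could reintroduce manipulability.

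For the \emph{competitive ratio}, I would compare the mechanism's cost against the optimal offline cost $C(\tilde y)$ stage by stage. Write the mechanism's solution as $y$, where $y^t$ is a fixed median of Stage $t$. At each stage the non-median cost $C_{NM}^t$ (as defined in the proof of Theorem~\ref{thm:optonline}) is minimised, exactly as it is under $\tilde y$, so the mechanism and the optimum incur \emph{identical} non-median cost at every stage. The entire loss is therefore concentrated in the residual cost, and more specifically in the movement term, since both solutions keep the facility at the middle-agent location. I would bound the extra movement cost $\sum_t |y^{t-1}-y^t|$ incurred by the mechanism against the total cost of $\tilde y$. The natural quantity to use is the supermedian length $\ell^t$: since there are $(n-1)/2$ agents on each side of the supermedian at every stage (for odd $n$), we have $\tfrac{n-1}{2}\ell^t \le C_{NM}^t(\tilde y)$, giving us a way to charge the movement to the non-median cost of the optimum. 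The even case requires the analogous counting with $n/2$ agents on one side and $n/2-1$ on the other (or the symmetric split), which accounts for the different ratio $(n+4)/n$ versus $(n+3)/(n+1)$.

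The main obstacle, and the step requiring the most care, is the movement-cost bound. Because this mechanism ignores $y^{t-1}$ entirely and snaps to the Stage-$t$ median, its movement between consecutive stages can be as large as the distance between consecutive middle agents, which the optimum may avoid by exploiting the freedom within its median intervals; so the loss is genuinely larger here than for Mechanism~$A$. I would bound $|y^{t-1}-y^t|$ by the triangle inequality through the optimum's facility locations, $|y^{t-1}-y^t|\le |y^{t-1}-\tilde y^{t-1}|+|\tilde y^{t-1}-\tilde y^t|+|\tilde y^t-y^t|$, and control each endpoint deviation $|y^t-\tilde y^t|$ by the half-length of the Stage-$t$ median interval, which is at most $\tfrac12\ell^t$. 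Summing these bounds over all stages, folding the optimum's own movement $|\tilde y^{t-1}-\tilde y^t|$ into $C(\tilde y)$, and applying the supermedian-to-non-median-cost inequality above should yield the additive bound $C(y)\le C(\tilde y)+\tfrac{c}{n}C(\tilde y)$ with the constant $c$ differing by parity, from which the stated ratios follow. Finally I would verify tightness by exhibiting instances (in the spirit of Example~\ref{exa:2}) where the middle agent alternates position so that the mechanism is forced to move the full supermedian length each stage while the optimum stays put, showing the competitive ratios $(n+4)/n$ and $(n+3)/(n+1)$ are attained.
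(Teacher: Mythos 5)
Your plan follows the same two-track structure as the paper's proof (per-stage reduction to Moulin's median rule for strategy-proofness; a triangle-inequality charge of the mechanism's extra movement through the optimum's trajectory, paid for by the supermedian length $\ell^t$, for the competitive ratio), but the strategy-proofness half contains a step that genuinely fails. The assertion that ``the per-stage group-strategy-proofness lifts to the multi-stage setting by summing'' is not valid for \emph{group}-strategy-proofness: per-stage group-SP only guarantees that at each stage some member of the coalition does not improve \emph{at that stage}, and that witness may be a different agent at each stage, compensated elsewhere by her partners' misreports. Concretely, take $n=5$, $T=3$, with Agent $t$ located at $1$ at Stage $t$, the other two coalition members at $0$, and Agents $4,5$ at $2$ in every stage. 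Truthfully the median is $1$ at every stage and each of Agents $1,2,3$ pays total cost $2$. If Agent $t$ misreports $0$ at Stage $t$ only, the median becomes $0$ at every stage and each of Agents $1,2,3$ pays total cost $1$: all three strictly gain. So summation cannot close this part of the argument (only \emph{individual} strategy-proofness follows by summing, since an agent's Stage-$t$ report affects only $y^t$); the cross-stage compensation effect is exactly what any correct treatment of group-SP here must confront, and it is also the point on which the paper's own inductive argument is thinnest.

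On the competitive ratio your charging scheme is the paper's, but the specific bounds you name are too lossy to reach the stated constants. For odd $n$ the optimum does \emph{not} keep the facility at the middle agent: $\tilde y^t$ is the point of $M^t(\tilde y^{t-1})$ nearest the next stage's middle agent, so while the non-median costs of $y$ and $\tilde y$ coincide, the optimum pays an extra $|\tilde y^t - x^t_{\lceil n/2\rceil}|$ in agent distance at each stage. The paper's telescoped triangle inequality bounds the mechanism's surplus movement by $2\sum_t |\tilde y^t - x^t_{\lceil n/2\rceil}|$ and then \emph{offsets} this against that saving, leaving a net excess of only $\sum_t |\tilde y^t - x^t_{\lceil n/2\rceil}|$; it then takes a convex combination of $|\tilde y^t - x^t_{\lceil n/2\rceil}|$ and $\ell^t$ against the lower bound $C^t(\tilde y) \geq |\tilde y^t - x^t_{\lceil n/2\rceil}| + \lfloor n/2\rfloor\,\ell^t$ to get the factor $2/(n+1)$. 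Your route --- bounding $|y^t-\tilde y^t|$ by $\tfrac12\ell^t$ and charging only to $\tfrac{n-1}{2}\ell^t \leq C^t_{NM}(\tilde y)$, with no offset --- yields at best a ratio of $1+2/(n-1)$, which falls short of $(n+3)/(n+1)$. The even case and the tightness instances are fine as sketched and match the paper's Example 4.
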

\begin{proof}
    Let $(n,T,y^0,x)$ be a facility reallocation instance. Assume without loss of generality that the mechanism always places the facility at the left of the two middle agents, in case $n$ is even. Let $\tilde{y} = (\tilde{y}^1,\ldots,\tilde{y}^T)$ be the optimal solution and let $y = (y^1,\ldots,y^T)$ be the solution output by the mechanism.
    
    We first show strategy-proofness. This follows by straightforward induction on the number of stages. For $T=1$, the instance is a classical facility allocation problem on a line, and our mechanism clearly belongs to Moulin's class $\mathcal{C}$ of group-strategy-proof mechanisms. (In particular, it is easy to see that a misreporting set of agents must fall entirely on one of the two sides of where the facility is placed, but this set of agents can only cause the facility to be placed further away from each of the agents.)
    
    Suppose as an induction hypothesis that all facility reallocation instances of $T-1$ stages are group-strategy-proof. We prove that also every facility reallocation instance of $T$ stages is group-strategy-proof. This game is a sequence of two games: The single stage game $G_1$ consisting of Stage $1$ only, and the $T-1$ stage game $G_2$ consisting of stages $[T]\setminus\{1\}$ with the (reported) middle agent of the first stage as a starting facility location. However, the facility placement at the latter game is independent of the starting location of the facility. Thus, if no agent misreports at Stage $1$, then by the induction hypothesis, no agent is incentivised to misreport at stages $2, \ldots, T$. If a set of agents misreports at Stage $1$, then at least one of the misreporting agents has worse cost at Stage 1, and the induced subgame $G_2$ does not change (as the facility locations in $G_2$ is independent of the starting location). Therefore, group-strategy-proofness holds for $T$ stages. 
    
    Next, we prove the appropriate upper bound on the competitive ratio. To this end, we will assume without loss of generality that $x_i^t \leq x_{i+1}^t$ for all $t \in [T]$ and all $n \in [n-1]$, so that $y^t = x_{\lceil n/2 \rceil}$ by definition of the mechanism.
    
    For odd $n$, we define the supermedian $S^t$ at stage $t$ as $[x_{\lceil n/2 \rceil - 1}^t, x_{\lceil n/2 \rceil + 1}^t]$ (i.e., we use the same definition as in the proof of Theorem \ref{thm:optonline}). For even $n$, we define the supermedian $S^t$ as the interval $[x_{n/2}^t, x_{n/2+1}^t]$.
    
    First, we analyse the difference in distance that the facility moves under both solutions $y$ and $\tilde{y}$. The optimal mechanism places the facility $\tilde{y}^t$ in $M^{t}(\tilde{y}^{t-1})$ for all $t \in [T]$ (by Corollary \ref{cor:median2}) which is contained in $S^t$. Thus, the optimal movement of the facility $|\tilde{y}^{t-1} - \tilde{y}^t|$ at stage $t$ is at least $d(S^t, S^{t-1})$, i.e., the shortest distance between a pair of points in $S^t \times S^{t-1}$, where for convenience we define $S^0 = \{y^0\}$. On the other hand, the facility movement $|y^t - y^{t-1}|$ generated by our mechanism at stage $t$ is at most $\ell^t + \ell^{t-1} + d(S^t,S^{t-1})$, where $\ell^t$ and $\ell^{t-1}$ denote the lengths of $S^t$ and $S^{t-1}$ respectively.
    
    Assume first that $n$ is even.
    It is clear that at every stage, the total distance between the facility location $y^t$ and the agent locations $x_1^t,\ldots,x_n^t$ is the minimum possible, and the same holds for $\tilde{y}^t$. 
    
    This implies that the difference in cost of both solutions is entirely attributed to the difference in total facility movement: 
    \begin{eqnarray*}
    C(y)-C(\tilde{y}) & \leq & \ell^t + \ell^{t-1} + d(S^t, S^{t-1}) - d(S^t, S^{t-1}) \\
    & = &  \sum_{t=1}^T (\ell^t + \ell^{t-1}) = \ell^T + 2 \sum_{t = 1}^{T-1} \ell^t \leq  2\sum_{t = 1}^{T} \ell^t.
    \end{eqnarray*}
    We then make use of the fact that the total distance between any two agents is $\ell^t$, so that the total cost generated by the agents at stage $t$ is at least $(n/2)\ell^t$. The latter implies that 
    \begin{equation*}
    \frac{C(y) - C(\tilde{y})}{C(\tilde{y})} \leq \frac{2\sum_{t = 1}^{T} \ell^t}{(n/2) \sum_{t=1}^T \ell^t} = \frac{4}{n},
    \end{equation*}
    which shows the desired upper bound on the competitive ratio for even $n$. 
    
    Lastly, suppose that $n$ is odd. We first bound the difference between the mechanism and optimum's total facility movement. The mechanism puts the facility at the median agent's location at each stage. Therefore, at each stage, the distance $|y^{t-1} - y^t|$ that the facility moves under the mechanism's output, is at most the distance traveled when the facility is first moved from location $y^{t-1} = x_{\lceil n/2 \rceil}^{t-1}$ to location $\tilde{y}^{t-1}$ as an intermediate step, after which it is moved from $\tilde{y}^{t-1}$ to $\tilde{y}_t$, and finally from $\tilde{y}^t$ to $y^t = x_{\lceil n/2 \rceil}^t$. Therefore, the difference between $y$ and $\tilde{y}$ in total facility movement can be bounded as follows (where for convenience we define $x_{\lceil n/2 \rceil}^0 = y^0 = \tilde{y}^0$).
    \begin{align}
    & \sum_{t=1}^T (|y^{t-1} - y^t| - |\tilde{y}^{t-1} - \tilde{y}^{t}|) \\
    & \qquad \leq \sum_{t = 1}^T (|y^{t-1} - \tilde{y}^{t-1}| + |\tilde{y}^{t-1} - \tilde{y}^t| + |\tilde{y}^t - y^t| - |\tilde{y}^{t-1} - \tilde{y}^{t}|) \notag \\
    & \qquad = \sum_{t = 1}^T |y^{t-1} - \tilde{y}^{t-1}| + |\tilde{y}^t - y^t| \notag \\
    & \qquad = \sum_{t = 1}^T |x_{\lceil n/2 \rceil}^{t-1} - \tilde{y}^{t-1}| + |\tilde{y}^t - x_{\lceil n/2 \rceil}^t| \notag \\
    & \qquad \leq 2 \sum_{t = 1}^T |x_{\lceil n/2 \rceil}^{t} - \tilde{y}^{t}|. \label{eq:movement}
    \end{align}
    The optimum does not always minimise the total distance between the facility and the agents at every stage, although the facility is always placed in $S^t$, so the total distance from the agents to the facility is at each stage $|\tilde{y}^{t} - x_{\lceil n/2 \rceil}^t|$ lower under $y^t$. We subtract this from our bound (\ref{eq:movement}) on the difference in facility movement distance, and we obtain:
    \begin{equation*}
    C(y)-C(\tilde{y}) \leq \sum_{t=1}^T |\tilde{y}^t - x_{\lceil n/2 \rceil}|.
    \end{equation*}
    The quantity $|\tilde{y}^t - x_{\lceil n/2 \rceil}|$ is at most the length $\ell^t$ of the supermedian, so we may bound the above by taking a convex combination of $|\tilde{y}^t - x_{\lceil n/2 \rceil}|$ and $\ell^t$, as follows.
    \begin{eqnarray*}
        C(y)-C(\tilde{y}) & \leq & \sum_{t=1}^T \left(\frac{1}{\lfloor n/2 \rfloor + 1}|\tilde{y}^t - x_{\lceil n/2 \rceil}| + \left(1 - \frac{1}{\lfloor n/2 \rfloor + 1}\right)\ell^t\right) \\ 
    \end{eqnarray*}
    At stage $t$, the distance between any two agents except the middle agent is $\ell^t$, and the distance between the middle agent and the facility is $|\tilde{y}^t - x_{\lfloor n/2 \rfloor}|$. Hence, $|\tilde{y}^t - x_{\lceil n/2 \rceil}| + \lfloor n/2 \rfloor \ell^t $ is a lower bound on $C^t(\tilde{y})$, and we use that to bound the following ratio:
    \begin{eqnarray*}
        \frac{C(y) - C(\tilde{y})}{C(\tilde{y})} & \leq & \frac{\sum_{t=1}^T \left(\frac{1}{\lfloor n/2 \rfloor + 1}|\tilde{y}^t - x_{\lceil n/2 \rceil}| + \left(1 - \frac{1}{\lfloor n/2 \rfloor + 1}\right)\ell^t\right)}{\sum_{t=1}^T (|\tilde{y}^t - x_{\lceil n/2 \rceil}| + \lfloor n/2 \rfloor \ell^t)} \\
        & = & \frac{\frac{1}{\lfloor n/2 \rfloor +1}\sum_{t=1}^T \left(|\tilde{y}^t - x_{\lceil n/2 \rceil}| + \lfloor n/2\rfloor \ell^t\right)}{\sum_{t=1}^T (|\tilde{y}^t - x_{\lceil n/2 \rceil}| + \lfloor n/2 \rfloor \ell^t)} \\
        & = & \frac{1}{\lfloor n/2 \rfloor +1} \\
        & = & \frac{2}{n+1},
    \end{eqnarray*} 
    which shows the desired upper bound for odd $n$.
\end{proof}

\begin{example}\label{exa:4}
The following family of examples shows that the analysis of the competitive ratio in Theorem \ref{thm:gspmech} is tight for all $n$. Let the starting facility location be $y^0 = 1$ and let there be two stages. In Stage $1$, Agents $1$ to $\lfloor n/2 \rfloor$ are located at $1$, and the remaining agents are located at $0$. In Stage $2$, all of the agents are located at $1$. The optimal Mechanism (see Corollary \ref{cor:median2}) places the facility at location $1$ in both stages (regardless of whether $n$ is odd of even), which results in a cost of $n/2$ if $n$ is even, and a cost of $(n+1)/2$ if $n$ is odd.

The mechanism of Theorem \ref{thm:gspmech} places the facility at location $0$ in the first stage, and at location $1$ in the second stage. This yields a total cost of $n/2 + 2$ if $n$ is even, and a total cost of $(n-1)/2 + 2$ when $n$ is odd. Thus, when $n$ is even, the competitive ratio on these instances is $((n/2) + 2)/(n/2) = (n+4)/n$, and when $n$ is odd, the competitive ratio is $((n+3)/2)/(n+1)/2 = (n+3)/(n+1)$.
\qed
\end{example}

\section{Discussion}

We studied a multi-stage variant of the classical facility location problem, where the problem is repeated over multiple stages and there is a cost incurred by moving the facility across stages. 

Our focus in this work has primarily been on identifying and computing the optimal facility placement and movement. We characterised the optimal mechanisms both in the offline and online setting. We considered this problem under the constraint of strategy-proofness as well. These mechanisms turn out to be elegant and simple in their definition, but are surprisingly challenging to analyse. Finally, we showed that neither of these mechanisms is strategy-proof, and devised a new strategy-proof mechanism. We analysed the performance of this strategy-proof mechanism in the online setting. 

Our mechanism definitions, and the properties that we prove about them, reveal some interesting insights, such as the discrepancy between the cases of an even and an odd number of agents, and the fact that there is a single-stage ``lookahead'' needed to achieve optimality in the odd case.

Interesting future directions are to design online and strategy-proof mechanisms for the generalised variant of the problem that we briefly considered, and to characterise the class of (group)-strategy-proof mechanisms for the basic version of the problem. We conjecture that the competitive ratio of Theorem \ref{thm:gspmech} is the best achievable among the strategy-proof mechanisms. Additionally, randomised mechanisms can be studied in this context as it is known that they outperform deterministic ones in the single stage case \cite{lu2010asymptotically}.

An alternative generalisation of the problem that would be interesting (and undoubtedly more complex) to study is to increase the dimension of the Euclidian space in which the locations lie, e.g. to consider facility reallocation on the plane instead of the line.

\section*{Acknowledgments}
The first author was partially supported by 
NWO grant 612.001.352, EPSRC
grant EP/P020909/1, and University of Liverpool Visiting Fellowship.
The second author was partially supported by 
EPSRC grants EP/M027287/1 and EP/P020909/1.
Part of the preparation of this manuscript was while the first author was appointed as a lecturer at University of Essex. 
We would like to thank Orestis Telelis and Guido Sch\"{a}fer for helpful discussion that led to this paper.

%

\bibliographystyle{plain}
\bibliography{reallocationjournal}


\end{document}